\newbox\tempa
\newbox\tempb
\newdimen\tempc
\def\mud#1{\hfil $\displaystyle{\mathstrut #1}$\hfil}
\def\rig#1{\hfil $\displaystyle{#1}$}
\def\irulehelp#1#2#3{\setbox\tempa=\hbox{$\displaystyle{\mathstrut #2}$}%
                        \setbox\tempb=\vbox{\halign{##\cr
        \mud{#1}\cr
        \noalign{\vskip\the\lineskip}%
        \noalign{\hrule height 0pt}%
        \rig{\vbox to 0pt{\vss\hbox to 0pt{${\; #3}$\hss}\vss}}\cr
        \noalign{\hrule}%
        \noalign{\vskip\the\lineskip}%

        \mud{\copy\tempa}\cr}}%
                      \tempc=\wd\tempb
                      \advance\tempc by \wd\tempa
                      \divide\tempc by 2 }
\def\irule#1#2#3{{\irulehelp{#1}{#2}{#3}%
                     \hbox to \wd\tempa{\hss \box\tempb \hss}}}
\begin{document}
\title{Complementation: a bridge between finite and infinite proofs}

\author{Gilles Dowek\inst{1} \and Ying Jiang\inst{2}}
\institute{Inria and \'Ecole normale sup\'erieure de Paris-Saclay, 
61, avenue du Pr\'esident Wilson,
94235 Cachan Cedex, France,
{\tt gilles.dowek@ens-cachan.fr}.
\and
State Key Laboratory of Computer Science,
Institute of Software, 
Chinese Academy of Sciences,
100190 Beijing, China,
{\tt jy@ios.ac.cn}.}
\maketitle

\begin{abstract}
When a proposition has no proof in an inference system, it is
sometimes useful to build a counter-proof explaining, step by step,
the reason of this non-provability.  In general, this counter-proof is
a (possibly) infinite co-inductive proof in a different inference system.  
In this paper, we show that, for some decidable inference systems, this 
(possibly) infinite proof has a representation as a finite proof in yet 
another system, equivalent to the previous one.  Conversely, to better
explain the reason of the non-provability, we introduce an efficient
method to transform a finite proof into a (possibly) infinite one.
The method is illustrated with an application to non-reachability proofs
in Alternating pushdown systems.
\end{abstract}

\section{Introduction}

When a proposition has no proof in an inference system, it is
sometimes useful to build a {\em counter-proof} explaining the reason
of this non-provability.

To explain this notion of counter-proof, let us consider an inference
system and a proposition that is not provable in it.  Then, for each
inference rule of this system, allowing to derive this proposition, at
least one premise is not provable.  A way to justify the
non-provability of this proposition is to point at, for each of these
rules, such a non provable premise. To justify the non provability of
these premises, we proceed in the same way, co-inductively building
this way a (possibly) infinite tree explaining, step by step, the
reason of the non-provability of the initial proposition.  This tree
is called a {\em counter-proof} of this proposition.

To express the proofs and the counter-proofs in a single system, we
extend the original inference system ${\cal I}$ into a complete
one ${\cal I}_{\cal J}$, called the {\em complementation} of ${\cal
  I}$, allowing to derive sequents of the form $\vdash A$ when the
proposition $A$ has a proof in ${\cal I}$ and $\not\vdash A$ when it
does not.  Hence, in the system ${\cal I}_{\cal J}$, the proofs of
sequents of the form $\vdash A$ are finite, while those of sequents of
the form $\not\vdash A$ are (possibly) infinite.

In this paper, this general notion of complementation is specialized
to some decidable systems. In \cite{DJ}, we introduced a general
method to prove the decidability of provability 
in an inference system
${\cal I}$, by transforming it into an automaton ${\cal A}$---an
inference systems containing introduction rules
only---preserving provability.  In this automaton, the bottom-up
search for a proof of a proposition $A$ succeeds if the proposition has 
a proof, and fails after a finite time if it does not.  A proof in this
automaton ${\cal A}$ can easily be transformed into a proof in the
original system ${\cal I}$, as all the rules of the automaton ${\cal
A}$ are derivable in ${\cal I}$.

Like the original system ${\cal I}$, the automaton ${\cal A}$ can be
extended into a complete system ${\cal A}_{\cal B}$. 
The system ${\cal B}$ is an automaton as well.  
Thus, in the system ${\cal A}_{\cal B}$, the proofs are
always finite.  We show this way that a sequent of the form $\not\vdash A$
has a (possibly) infinite proof in ${\cal I}_{\cal J}$ if and only if
it has a finite proof in ${\cal A}_{\cal B}$.  In other words, a
(possibly) infinite proof in ${\cal I}_{\cal J}$ always has a finite
representation as a proof in ${\cal A}_{\cal B}$.

However, a proof of a sequent $\not\vdash A$ in ${\cal A}_{\cal B}$ is
not very informative as it can only explain the reason of the
non-provability in the automaton ${\cal A}$, but not in the original
system ${\cal I}$.  The third contribution of this paper is to give an
effective and efficient method to transform a finite proof of
$\not\vdash A$ in ${\cal A}_{\cal B}$ into a more informative
(possibly) infinite proof of this sequent in the system ${\cal
  I}_{\cal J}$.

The inspiration for the present paper is threefold.  The first source
of inspiration is the methods developed to build counter-examples
\cite{Clarke} in model-checking \cite{CGP,CJ}. This possibility to
build counter-example is one of the main advantages of model-checking,
as it allows to find the cause of subtle errors in complex designs.
The second is the notion of negation in logic programming, typically
the notion of negation as failure \cite{Clark}, being used in Prolog
\cite{Roussel} and in artificial intelligence systems, such as Planner
\cite{Hewitt}.  The third is the notions of co-inductive definition
and co-inductive proof \cite{JR,Sangiorgi}, and specially the duality
between inductively and co-inductively definable sets.

The rest of this paper is structured as follows.  After recalling the
notion of proof in Section \ref{Sc:Finite-Infinite-Proof}, we show, in
Section \ref{Sc:Complementation}, how an inference system ${\cal I}$
can be extended into a complete system ${\cal I}_{\cal J}$.  In
Section \ref{Sc:Infinite-to-Finite}, we first recall the notion of
automaton ${\cal A}$ associated to an inference system ${\cal I}$.
Then, we complement this automaton into a system ${\cal A}_{\cal B}$,
and show that proofs are always finite in ${\cal A}_{\cal B}$ and that
${\cal A}_{\cal B}$ is equivalent to ${\cal I}_{\cal J}$.  In Section
\ref{Sc:Finite-to-Infinite}, 
which is the main contribution of the paper,
we design an efficient algorithm which
transforms co-inductively a finite proof in ${\cal A}_{\cal B}$ into a
more informative (possibly) infinite proof in ${\cal I}_{\cal J}$.  In
Section \ref{Sc:Application}, the method is illustrated with
an application to non-reachability certificates in Alternating pushdown
systems \cite{BEM}.

\section{Finite and infinite proofs}
\label{Sc:Finite-Infinite-Proof}

Let ${\cal S}$ be a set whose elements are called {\em propositions}. 

\begin{definition}[Inference rule, inference system, finite in conclusions]

An {\em inference rule} is a partial function from ${\cal S}^n$ to ${\cal S}$,
for some natural number $n$, called the {\em number of premises} of this rule.
An {\em inference system} ${\cal I}$ is a set of inference rules.
It defines a function $F_{\cal I}$ from
${\cal P}({\cal S})$ to ${\cal P}({\cal S})$
$${\small F_{\cal I}(X) =\{f(A_1, ..., A_n)~|~\mbox{$f$ in 
${\cal I}$ and $A_1, ..., A_n$ in $X$}\}}$$

An inference system is {\em finite in conclusions} if, 
for each proposition $B$, there is only a finite number of sequences of
propositions $\langle A^1_1, ..., A^1_{n_1} \rangle$, ..., $\langle
A^p_p, ..., A^p_{n_p} \rangle$ from which $B$ can be derived with a
rule of the system.
\end{definition}

The next definition permits to simplify a rule such as
$${\small \irule{P(x)~~~P(x)}{Q(x)}{}}$$
into the equivalent one
$${\small \irule{P(x)}{Q(x)}{}}$$

\begin{definition}[Simplification]
Let $f$ be an inference rule and 
$D$ its domain.  If 
there exist $i$ and $j$, $i \neq j$, such that, for 
all $\langle x_1, ..., x_n \rangle$ in $D$, $x_i = x_j$, 
then we can replace this rule $f$ with the equivalent rule $f'$ defined by 

\medskip
\noindent
{\small$f'(x_1,
  ..., x_{i-1}, x_i, x_{i+1}, ..., x_{j-1}, x_{j+1}, ..., x_n) =$

\hfill $f(x_1, ..., x_{i-1}, x_i, x_{i+1}, ..., x_{j-1}, x_i, x_{j+1}, ..., x_n)$}
\end{definition}

\begin{definition}[Derivable rule]
\label{saturation}
If $g$ is an inference rule with $n$ premises and $f_1$, ..., $f_n$
are functions such that each $f_i$ is either an inference rule with
$m_i$ premises or the identity function, in which case $m_i = 1$, 
then the rule $h$ defined by
{\small $$h(x^1_1, ..., x^1_{m_1}, ..., x^n_1, ..., x^n_{m_n}) = 
g(f_1(x^1_1, ..., x^1_{m_1}), ..., f_n(x^n_1, ..., x^n_{m_n}))$$}
is the {\em derivable rule obtained by 
composing $g$ and $f_1, ..., f_n$}. 

The domain of this function is the set 
$\langle x^1_1, ..., x^1_{m_1}, ..., x^n_1, ..., x^n_{m_n} \rangle$ such that 
$\langle x^1_1, ..., x^1_{m_1} \rangle$ is in the domain of $f_1$, ..., 
$\langle x^n_1, ..., x^n_{m_n} \rangle$ is in the domain of $f_n$, 
and 
$\langle f_1(x^1_1, ..., x^1_{m_1}),$ $..., f_n(x^n_1, ..., x^n_{m_n}) 
\rangle$ is in the domain of $g$.
\end{definition}

\begin{definition}[Continuous, Co-continuous]
A monotone function $F$ from ${\cal P}({\cal S})$ to ${\cal P}({\cal S})$  is {\em continuous} if, 
for all increasing sequences $X_0, X_1, ...$ in ${\cal P}({\cal S})$, 
$F(\bigcup_n X_n) = \bigcup_n F(X_n)$.
It is {\em co-continuous} if, 
for all decreasing sequences $X_0, X_1, ...$ in ${\cal P}({\cal S})$, 
$F(\bigcap_n X_n) = \bigcap_n F(X_n)$.
\end{definition}

\begin{definition}[Proof]
A {\em (possibly) infinite proof}, also called a {\em co-inductive proof},
in an inference system ${\cal I}$ is a
(possibly) infinite tree labeled with propositions such that, when a node
is labeled with a proposition $B$ and its children are labeled with
propositions $A_1$, ..., $A_n$, then $B$ can be derived with a rule of 
${\cal I}$ from the sequence of propositions $\langle A_1, ..., A_n \rangle$.
A proof is a {\em proof of a proposition} $A$ if its root is labeled with $A$.

A proof that is a finite tree is called a {\em finite proof}.
\end{definition}

\begin{proposition}
Let ${\cal I}$ be an inference system. If ${\cal I}$ is finite in conclusions 
then 
\begin{itemize}
\item 
the function $F_{\cal I}$ is monotone, continuous, and co-continuous,  
\item 
it has a least fixed-point and a greatest fixed-point, 
\item 
an element of ${\cal S}$ is in the least fixed point of $F_{\cal I}$ if and only if
it has an finite proof in ${\cal I}$, and it is in
the greatest fixed point of $F_{\cal I}$ if and only
if it has a (possibly) infinite proof in ${\cal I}$.
\end{itemize}
\end{proposition}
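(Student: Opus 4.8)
The plan is to establish the three order-theoretic properties of $F_{\cal I}$ in turn, then derive the existence of the two fixed points from Knaster--Tarski, and finally identify each fixed point with the corresponding class of proofs. Monotonicity comes for free from the definition: if $X \subseteq Y$ and $B = f(A_1,\ldots,A_n)$ with all $A_i \in X$, then all $A_i \in Y$ as well, so $F_{\cal I}(X) \subseteq F_{\cal I}(Y)$. Since ${\cal P}({\cal S})$ ordered by inclusion is a complete lattice, monotonicity alone already guarantees, by the Knaster--Tarski theorem, that $F_{\cal I}$ has a least and a greatest fixed point; this settles the second item without any appeal to the finiteness hypothesis.

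For continuity, given an increasing sequence $X_0 \subseteq X_1 \subseteq \cdots$, the inclusion $\bigcup_n F_{\cal I}(X_n) \subseteq F_{\cal I}(\bigcup_n X_n)$ is just monotonicity. For the reverse inclusion I would take any $B = f(A_1,\ldots,A_k)$ in $F_{\cal I}(\bigcup_n X_n)$, note that each premise $A_i$ lies in some $X_{n_i}$, and use that the chain is increasing together with the finiteness of the number $k$ of premises to conclude that all the $A_i$ already lie in $X_m$ for $m = \max_i n_i$, whence $B \in F_{\cal I}(X_m)$. Only the per-rule finiteness of premises is used here, not the full hypothesis.

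Co-continuity is the step that genuinely needs ``finite in conclusions'', and I expect it to be the main obstacle. For a decreasing sequence $X_0 \supseteq X_1 \supseteq \cdots$ the inclusion $F_{\cal I}(\bigcap_n X_n) \subseteq \bigcap_n F_{\cal I}(X_n)$ is again monotonicity, so the content is the reverse inclusion. I would fix $B \in \bigcap_n F_{\cal I}(X_n)$, so that for every $n$ there is some premise-sequence deriving $B$ that is contained in $X_n$. Because ${\cal I}$ is finite in conclusions, only finitely many premise-sequences can derive $B$, so by the pigeonhole principle a single such sequence $s$ is contained in $X_n$ for infinitely many $n$. Since the chain is decreasing, the set of indices $n$ with $s \subseteq X_n$ is downward closed, and an infinite downward-closed subset of $\mathbb{N}$ is all of $\mathbb{N}$; hence $s \subseteq \bigcap_n X_n$ and $B \in F_{\cal I}(\bigcap_n X_n)$. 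The delicate point is precisely this interplay of pigeonhole with the monotonicity of the decreasing chain, which upgrades ``for each $n$ some sequence works'' to ``one sequence works for all $n$''.

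For the last item I would treat the two fixed points separately. Continuity yields the Kleene description of the least fixed point as $\bigcup_n F_{\cal I}^n(\emptyset)$, and a straightforward induction on $n$ identifies $F_{\cal I}^n(\emptyset)$ with the propositions possessing a finite proof of height at most $n$, so the union is exactly the propositions having a finite proof. For the greatest fixed point I would argue straight from the fixed-point equation rather than through Kleene's theorem: the set $P$ of propositions having a (possibly) infinite proof is a post-fixed point, since unfolding the root of such a proof writes $B$ as $f(A_1,\ldots,A_k)$ with each $A_i \in P$, and therefore $P$ is contained in the greatest fixed point; conversely every $B$ in the greatest fixed point satisfies $B = f(A_1,\ldots,A_k)$ with all $A_i$ still in that fixed point, and iterating this unfolding corecursively produces a (possibly) infinite proof of $B$. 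Alternatively, one could use the dual Kleene form $\bigcap_n F_{\cal I}^n({\cal S})$ together with K\"onig's lemma, where the finite branching of the proof-search tree is once more supplied by ``finite in conclusions''.
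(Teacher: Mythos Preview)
Your argument is correct and complete. The paper, however, does not give any proof at all: it simply writes ``See \cite{Sangiorgi}'' and defers everything to Sangiorgi's textbook on bisimulation and coinduction. So there is no approach to compare against; you have supplied a self-contained proof where the paper chose to cite a standard reference.

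A couple of minor remarks on your write-up. First, your observation that continuity only needs the per-rule finiteness of the number of premises (built into the definition of an inference rule as a function on ${\cal S}^n$), whereas co-continuity genuinely requires the ``finite in conclusions'' hypothesis, is exactly right and worth keeping. Second, in the greatest-fixed-point direction your primary argument via post-fixed points and corecursive unfolding is the cleanest; the K\"onig alternative you mention at the end does work, but as stated it would give an infinite \emph{branch} rather than an infinite \emph{tree}, so if you keep it you should phrase the compactness argument on the tree of finite partial proofs (ordered by extension) rather than on a single path.
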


\begin{proof}
See \cite{Sangiorgi}.
\end{proof}

\section{Complementation}
\label{Sc:Complementation}

\begin{definition}[Complement, Complementation]
Let ${\cal I}$ be an inference system 
finite in conclusions.
An inference system ${\cal J}$ is said to be a {\em complement} of
${\cal I}$ if it is also finite in conclusions and for each
proposition $B$, if $\langle A^1_1, ..., A^1_{n_1} \rangle$, ...,
$\langle A^p_1, ..., A^p_{n_p} \rangle$ are all the sequences of
premises from which $B$ can be derived with a rule of ${\cal I}$, then 
the sequences of premises from which $B$ can be derived with a rule of 
${\cal J}$ are all the sequences of the form 
$\langle A^1_{j_1}, ..., A^p_{j_p} \rangle$ for some sequence $j_i$.

Let ${\cal S}'$ be the set containing sequents of the form $\vdash A$
and $\not\vdash A$, for $A$ in ${\cal S}$.
Let ${\cal I}$ be an inference system
and ${\cal J}$ be 
a complement of ${\cal I}$.
The inference system ${\cal I}_{\cal J}$, obtained by taking the rule
mapping $\vdash A_1, ..., \vdash A_n$ to $\vdash f(A_1, ..., A_n)$, 
for each inference rule $f$ in ${\cal I}$, and 
the rule mapping $\not\vdash A_1, ..., \not\vdash A_n$ to 
$\not\vdash f(A_1, ..., A_n)$, for each inference rule $f$ in ${\cal J}$, 
is said to be a {\em complementation} of ${\cal I}$.
\end{definition}

\begin{example}
\label{example1}
Consider the language containing 
a constant $\varepsilon$, 
a unary function symbol $a$, 
and unary predicate symbols $P$, $Q$, $R$, $S$, $T$, $U$ and $V$. 
As usual, the term $a(t)$ is written $a t$. 
Let ${\cal I}$ be the inference system containing the following inference rules
$${\small \begin{array}{lllll}
\irule{U(x)}{Q(a x)}{}
~~~~~~~~~~~~~~
&
\irule{V(x)}{Q(a x)}{}
~~~~~~~~~~~~~~
&
\irule{T(x)}{R(a x)}{}
~~~~~~~~~~~~~~
&
\irule{}{T(x)}{}
~~~~~~~~~~~~~~
&
\irule{Q(x)~R(x)}{P(x)}{}
\\
\\
\irule{S(x)}{P(x)}{}
~~~~~~~~~~~~~~
&
\irule{P(a x)}{Q(x)}{}
\end{array}}$$
Then, the system ${\cal J}$ containing the following rules 
$${\small \begin{array}{lllll}
\irule{Q(x)~~S(x)}{ P(x)}{}
~~~~~~~~~~~~~~
&
\irule{R(x)~~S(x)}{P(x)}{}
~~~~~~~~~~~~~~
&
\irule{P(a)}{Q(\varepsilon)}{}
~~~~~~~~~~~~~~
&
\irule{P(a a x)~~~ U(x)~ ~~V(x)}{ Q(a x)}{}
~~~~~~~~~~~~~~
&
\irule{}{ R(\varepsilon)}{}
\\
\\
\irule{T(x)}{R(a x)}{}
~~~~~~~~~~~~~~
&
\irule{}{S(x)}{}
~~~~~~~~~~~~~~~
&
\irule{}{U(x)}{}
~~~~~~~~~~~~~~~
&
\irule{}{V(x)}{}
\end{array}}$$
is a complement of ${\cal I}$ and 
the system ${\cal I}_{\cal J}$ containing the rules
$${\scriptsize 
\begin{array}{lllll}
\irule{\vdash U(x)}{\vdash Q(a x)}{}
~~~~~~~~~~~~~~
&
\irule{\vdash V(x)}{\vdash Q(a x)}{}
~~~~~~~~~~~~~~
&
\irule{\vdash T(x)}{\vdash R(a x)}{}
~~~~~~~~~~~~~~
&
\irule{}{\vdash T(x)}{}
~~~~~~~~~~~~~~
&
\irule{\vdash Q(x)~\vdash R(x)}{\vdash P(x)}{}
\\
\\
\irule{\vdash S(x)}{\vdash P(x)}{}
~~~~~~~~~~~~~~
&
\irule{\vdash P(a x)}{\vdash Q(x)}{}
\\
\\
\irule{\not\vdash Q(x)~\not\vdash S(x)}{\not\vdash P(x)}{}
~~~~~~~~~~~~~~
&
\irule{\not\vdash R(x)~\not\vdash S(x)}{\not\vdash P(x)}{}
~~~~~~~~~~~~~~
&
\irule{\not\vdash P(a)}{\not\vdash Q(\varepsilon)}{}
~~~~~~~~~~~~~~
&
\irule{\not\vdash P(a a x)~\not\vdash U(x)~\not\vdash V(x)}{\not\vdash Q(a x)}{}
~~~~~~~~~~~~~~
&
\irule{}{\not\vdash R(\varepsilon)}{}
\\
\\
\irule{\not\vdash T(x)}{\not\vdash R(a x)}{}
~~~~~~~~~~~~~~
&
\irule{}{\not\vdash S(x)}{}
~~~~~~~~~~~~~~
&
\irule{}{\not\vdash U(x)}{}
~~~~~~~~~~~~~~
&
\irule{}{\not\vdash V(x)}{}
\end{array}}$$
is a complementation of ${\cal I}$.
\end{example}

\begin{definition}[Conjugate function]
The {\em conjugate} $G$ of a function $F$ 
from ${\cal P}({\cal S})$ to ${\cal P}({\cal S})$ 
is the function from ${\cal P}({\cal S})$ to ${\cal P}({\cal S})$ 
defined by 
{\small $$G(X) = {\cal S} \setminus F({\cal S} \setminus X)$$}
\end{definition} 

\begin{lemma}
\label{complement}
If the function $F$ is continuous, then the function 
$G$ is co-continuous and the complement of the least fixed-point 
of $F$ is the greatest fixed point of $G$:
$${\small {\cal S} \setminus (\bigcup_n F^n(\varnothing)) = \bigcap_n G^n({\cal S})}$$
\end{lemma}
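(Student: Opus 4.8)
The plan is to prove the two claims of Lemma~\ref{complement} in sequence: first that $G$ is co-continuous, and then that $\mathcal{S} \setminus (\bigcup_n F^n(\varnothing)) = \bigcap_n G^n(\mathcal{S})$. Throughout, I will write $C(X) = \mathcal{S} \setminus X$ for the complementation operator on $\mathcal{P}(\mathcal{S})$, so that by definition $G = C \circ F \circ C$. The key elementary fact I will lean on is that $C$ is an order-reversing involution: $C(C(X)) = X$, and $C$ turns unions into intersections and increasing sequences into decreasing sequences (and vice versa), via the De Morgan identities $C(\bigcup_n X_n) = \bigcap_n C(X_n)$ and $C(\bigcap_n X_n) = \bigcup_n C(X_n)$.

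\emph{Co-continuity of $G$.} Let me sketch what I'd verify. To show $G$ is co-continuous I first check it is monotone — this follows since $F$ is monotone (being continuous) and $C$ is order-reversing, so the composite $C \circ F \circ C$ is order-preserving. Then, given a decreasing sequence $X_0 \supseteq X_1 \supseteq \cdots$, I apply $C$ to obtain an increasing sequence $C(X_0) \subseteq C(X_1) \subseteq \cdots$, push it through the computation
$$G\Bigl(\bigcap_n X_n\Bigr) = C\Bigl(F\bigl(C(\textstyle\bigcap_n X_n)\bigr)\Bigr) = C\Bigl(F\bigl(\textstyle\bigcup_n C(X_n)\bigr)\Bigr),$$
use the continuity of $F$ to turn $F(\bigcup_n C(X_n))$ into $\bigcup_n F(C(X_n))$, and finally apply the De Morgan law one last time to recover $\bigcap_n C(F(C(X_n))) = \bigcap_n G(X_n)$.

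\emph{The fixed-point identity.} The second claim is a duality statement: since $F$ is continuous on $\mathcal{P}(\mathcal{S})$, Kleene's theorem gives its least fixed point as $\bigcup_n F^n(\varnothing)$, and since $G$ is co-continuous its greatest fixed point is $\bigcap_n G^n(\mathcal{S})$. The bridge between the two iterations is the claim that $C(F^n(\varnothing)) = G^n(\mathcal{S})$ for every $n$, which I would prove by induction. The base case uses $C(\varnothing) = \mathcal{S}$. For the inductive step, I compute
$$G^{n+1}(\mathcal{S}) = G\bigl(G^n(\mathcal{S})\bigr) = C\bigl(F(C(G^n(\mathcal{S})))\bigr) = C\bigl(F(C(C(F^n(\varnothing))))\bigr) = C\bigl(F^{n+1}(\varnothing)\bigr),$$
where the third equality invokes the induction hypothesis and the fourth uses $C \circ C = \mathrm{id}$. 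Applying $C$ to both sides of $C(F^n(\varnothing)) = G^n(\mathcal{S})$ and taking the intersection over $n$, together with De Morgan, then yields $\bigcap_n G^n(\mathcal{S}) = \bigcap_n C(F^n(\varnothing)) = C(\bigcup_n F^n(\varnothing))$, which is exactly the desired equation.

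The main obstacle is not conceptual but a matter of bookkeeping: I must be careful that the De Morgan laws for arbitrary families hold over $\mathcal{S}$ and that the direction of monotonicity is tracked correctly through each application of the order-reversing map $C$, since a single sign error would invert least and greatest fixed points. I would also make explicit that co-continuity of $G$ is genuinely needed to justify writing its greatest fixed point as $\bigcap_n G^n(\mathcal{S})$, mirroring the role continuity of $F$ plays for $\bigcup_n F^n(\varnothing)$.
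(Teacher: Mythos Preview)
Your proof is correct and follows essentially the same approach as the paper: establish co-continuity of $G$ from the continuity of $F$ via the De Morgan laws, prove $G^n(\mathcal{S}) = \mathcal{S} \setminus F^n(\varnothing)$ by induction on $n$, and conclude using $\mathcal{S} \setminus \bigcup_n F^n(\varnothing) = \bigcap_n (\mathcal{S} \setminus F^n(\varnothing))$. The paper's proof is terser, but the content is the same.
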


\begin{proof}
From the definition of $G$ and the continuity of $F$, we get that $G$ is 
co-continuous.
Then, by induction on $n$, we 
prove that $G^n({\cal S}) = {\cal S}
\setminus F^n(\varnothing)$. As 
${\cal S} \setminus \bigcup_n F^n(\varnothing)
= \bigcap_n ({\cal S} \setminus F^n(\varnothing))$,
we conclude that
${\cal S} \setminus (\bigcup_n F^n(\varnothing)) = \bigcap_n G^n({\cal S})$.
\end{proof}

\begin{lemma}
\label{complement2}
Let ${\cal I}$ be an inference system that has a complement ${\cal J}$. 
Then, the function $F_{\cal J}$ is the the conjugate of the function 
$F_{\cal I}$, that is, for all $X$ in ${\cal P}({\cal S})$, 
$F_{\cal J}(X) = {\cal S} \setminus F_{\cal I}({\cal S} \setminus X)$. 
\end{lemma}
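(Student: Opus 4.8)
The plan is to prove the set equality $F_{\cal J}(X) = {\cal S} \setminus F_{\cal I}({\cal S} \setminus X)$ for a fixed but arbitrary $X$ by establishing, for each proposition $B$, the membership equivalence $B \in F_{\cal J}(X)$ iff $B \notin F_{\cal I}({\cal S} \setminus X)$. Since each of these functions is obtained by applying the defining equation of $F$ to the rules of ${\cal I}$ and of ${\cal J}$ respectively, and since the complement ${\cal J}$ is specified entirely in terms of the derivation sequences of ${\cal I}$, I expect the whole argument to reduce to a propositional De Morgan / distributivity identity, unfolded at the level of membership.

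First I would fix $B$ and let $\langle A^1_1, ..., A^1_{n_1} \rangle, ..., \langle A^p_1, ..., A^p_{n_p} \rangle$ enumerate all the sequences of premises from which $B$ can be derived by a rule of ${\cal I}$; finiteness of this list is guaranteed because ${\cal I}$ is finite in conclusions. Unfolding the definition of $F_{\cal I}$, the proposition $B$ lies in $F_{\cal I}({\cal S} \setminus X)$ exactly when some rule of ${\cal I}$ derives $B$ from premises all lying in ${\cal S} \setminus X$, that is, exactly when there is an index $i$ with $A^i_1, ..., A^i_{n_i}$ all outside $X$. Negating, $B \notin F_{\cal I}({\cal S} \setminus X)$ holds iff for every $i \in \{1, ..., p\}$ at least one of $A^i_1, ..., A^i_{n_i}$ belongs to $X$; equivalently, there is a choice $(j_1, ..., j_p)$, with each $j_i \in \{1, ..., n_i\}$, such that $A^i_{j_i} \in X$ for all $i$.

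Next I would compute the left-hand side. By the definition of complement, the sequences of premises from which $B$ is derivable by a rule of ${\cal J}$ are precisely the sequences $\langle A^1_{j_1}, ..., A^p_{j_p} \rangle$ indexed by such choices $(j_1, ..., j_p)$. Unfolding $F_{\cal J}$ then gives that $B \in F_{\cal J}(X)$ iff some such sequence has all its members in $X$, i.e.\ iff there is a choice $(j_1, ..., j_p)$ with $A^i_{j_i} \in X$ for all $i$. This is verbatim the condition obtained in the previous step for $B \notin F_{\cal I}({\cal S} \setminus X)$, so the two memberships coincide and the equality follows. The content is exactly the distributive law turning the negation of an ``or of ands'' (the ${\cal I}$-side) into an ``or of ands'' ranging over choice functions (the ${\cal J}$-side).

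The only delicate points, which I would treat explicitly, are the degenerate cases hidden in the quantifiers. When $p = 0$ (no rule of ${\cal I}$ derives $B$), the single empty choice yields the empty premise sequence for ${\cal J}$, so $B \in F_{\cal J}(X)$ for every $X$, matching $B \notin F_{\cal I}({\cal S} \setminus X)$, which always holds. Conversely, when some derivation sequence for $B$ in ${\cal I}$ is empty, i.e.\ $n_i = 0$ so that $B$ is an axiom, no valid index $j_i$ exists, hence ${\cal J}$ derives $B$ from no sequence at all and $B \notin F_{\cal J}(X)$ for every $X$, matching $B \in F_{\cal I}({\cal S} \setminus X)$, which always holds. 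These vacuous-quantifier boundary cases are the main place where an informal De Morgan slogan could go wrong, so pinning them down is where I would spend the care; the generic case is a direct rewriting of definitions.
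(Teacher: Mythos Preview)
Your proposal is correct and follows essentially the same approach as the paper: fix $B$, enumerate the finitely many ${\cal I}$-derivation sequences for $B$, and reduce both sides to the existence of a choice function $(j_1,\dots,j_p)$ with $A^i_{j_i}\in X$. The paper presents this as two separate inclusions rather than a single iff-chain and does not spell out the degenerate cases $p=0$ and $n_i=0$ that you handle explicitly, but the underlying argument is identical.
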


\begin{proof}
Consider a proposition $B$ and let $\langle A^1_1, ..., A^1_{n_1}
\rangle$, ..., $\langle A^p_1, ..., A^p_{n_p} \rangle$ be the 
$p$ sequences from which $B$ can be derived with a rule of ${\cal I}$.
\begin{itemize}
\item
If $B$ is an element of
$F_{\cal J}(X)$, then it is derivable with a rule of ${\cal J}$ from
the premises $A^1_{j_1}, ..., A^p_{j_p}$ in $X$.  Thus, none of these
propositions is in ${\cal S} \setminus X$.  
Each of the sequences $\langle A^1_1, ..., A^1_{n_1} \rangle$, ..., 
$\langle A^p_1, ..., A^p_{n_p} \rangle$ contains a proposition that is not 
in ${\cal S} \setminus X$, thus $B$ is not derivable 
with a rule of ${\cal I}$ from the propositions of ${\cal S} \setminus X$. 
Thus, $B$ is an element of ${\cal S} \setminus F_{\cal I}({\cal S} \setminus X)$.

\item Conversely, if $B$ is an element of ${\cal S} \setminus F_{\cal I}({\cal S}
\setminus X)$, it is not derivable  with a rule of ${\cal I}$
from the propositions of ${\cal S} \setminus X$. 
Then each of the
sequences $\langle A^1_1, ..., A^1_{n_1} \rangle$, ..., $\langle
A^p_1, ..., A^p_{n_p} \rangle$ contains an element $A^i_{j_i}$ that is
not in ${\cal S} \setminus X$.  Therefore, all the propositions
$A^1_{j_1}, ..., A^p_{j_p}$ are in $X$ and hence $B$ is derivable
with a rule of ${\cal J}$ from the propositions of $X$. 
Thus $B$ is an element of $F_{\cal J}(X)$.
\end{itemize}
\end{proof}

\begin{proposition}
\label{lemmacompleteness}
Let ${\cal I}$ be an inference system that has a complement 
${\cal J}$. 
Then, a proposition has a (possibly) infinite proof in ${\cal J}$ if 
and only if it has no finite proof in ${\cal I}$.
\end{proposition}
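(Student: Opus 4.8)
The plan is to connect the syntactic notions (finite/infinite proofs in $\mathcal{I}$ and $\mathcal{J}$) with the semantic fixed-point characterization established in the earlier Proposition, and then exploit Lemmas~\ref{complement} and~\ref{complement2} to relate the two systems. The key observation is that a complement $\mathcal{J}$ of a system $\mathcal{I}$ is itself finite in conclusions by definition, so both systems fall under the scope of the fixed-point Proposition. Hence a proposition $A$ has a finite proof in $\mathcal{I}$ if and only if $A$ lies in the least fixed point $\bigcup_n F_{\mathcal{I}}^n(\varnothing)$, and $A$ has a (possibly) infinite proof in $\mathcal{J}$ if and only if $A$ lies in the greatest fixed point $\bigcap_n F_{\mathcal{J}}^n(\mathcal{S})$.

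First I would invoke Lemma~\ref{complement2} to identify $F_{\mathcal{J}}$ as the conjugate $G$ of $F_{\mathcal{I}}$, that is, $F_{\mathcal{J}}(X) = \mathcal{S} \setminus F_{\mathcal{I}}(\mathcal{S} \setminus X)$ for all $X$. Since $\mathcal{I}$ is finite in conclusions, $F_{\mathcal{I}}$ is continuous by the Proposition, so the hypotheses of Lemma~\ref{complement} are met. Applying that lemma directly yields the set identity
$$\mathcal{S} \setminus \Bigl(\bigcup_n F_{\mathcal{I}}^n(\varnothing)\Bigr) = \bigcap_n F_{\mathcal{J}}^n(\mathcal{S}),$$
which states precisely that the complement of the least fixed point of $F_{\mathcal{I}}$ equals the greatest fixed point of $F_{\mathcal{J}}$.

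It then remains to translate this set equality back into the language of proofs. A proposition $A$ has \emph{no} finite proof in $\mathcal{I}$ exactly when $A \notin \bigcup_n F_{\mathcal{I}}^n(\varnothing)$, i.e.\ when $A \in \mathcal{S} \setminus \bigcup_n F_{\mathcal{I}}^n(\varnothing)$. By the displayed identity this is equivalent to $A \in \bigcap_n F_{\mathcal{J}}^n(\mathcal{S})$, which by the greatest-fixed-point clause of the Proposition holds if and only if $A$ has a (possibly) infinite proof in $\mathcal{J}$. Chaining these equivalences gives the statement.

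The step I expect to require the most care is verifying that all the hypotheses genuinely apply, rather than any deep argument. In particular, one must confirm that $\mathcal{J}$ being finite in conclusions (part of the definition of a complement) is what licenses the fixed-point Proposition for $F_{\mathcal{J}}$, and that the continuity of $F_{\mathcal{I}}$ needed for Lemma~\ref{complement} is supplied by $\mathcal{I}$ being finite in conclusions. Once the bookkeeping of which system supplies which hypothesis is settled, the proof is essentially a composition of Lemmas~\ref{complement} and~\ref{complement2} with the two clauses of the earlier Proposition, and the remaining work is purely routine unfolding of definitions.
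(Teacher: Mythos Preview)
Your proposal is correct and follows essentially the same route as the paper: use the fixed-point characterization of (possibly) infinite proofs, invoke Lemma~\ref{complement2} to identify $F_{\mathcal{J}}$ with the conjugate of $F_{\mathcal{I}}$, then apply Lemma~\ref{complement} to pass from the greatest fixed point of the conjugate to the complement of the least fixed point of $F_{\mathcal{I}}$. Your added care in checking that $\mathcal{J}$ is finite in conclusions and that $F_{\mathcal{I}}$ is continuous is welcome bookkeeping, but the argument is the same chain of equivalences the paper gives.
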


\begin{proof}
A proposition $A$ has a (possibly) infinite proof in ${\cal J}$
if and only it is an element of the greatest fixed point of the co-continuous 
function $F_{\cal J}$, if and only if it is an element of the 
greatest fixed point of the
co-continuous conjugate function of 
$F_{\cal I}$ (by Lemma \ref{complement2}), 
if and only if it is not an element of the least fixed point of the function 
$F_{\cal I}$ (by Lemma \ref{complement}), 
if and only if it has no finite proof in ${\cal I}$.
\end{proof}

\begin{theorem}[Completeness]
\label{completeness}
Let ${\cal I}$ be an inference system that has a complement 
${\cal J}$. Then, for all propositions $A$ in ${\cal S}$, either $\vdash A$ 
has a finite proof or $\not\vdash A$ has a (possibly) infinite proof in 
${\cal I}_{\cal J}$. Thus, the system ${\cal I}_{\cal J}$ is complete.
\end{theorem}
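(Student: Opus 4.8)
The plan is to reduce the statement to Proposition~\ref{lemmacompleteness} by observing that the rules of ${\cal I}_{\cal J}$ split into two families that faithfully mirror the rules of ${\cal I}$ and of ${\cal J}$. Concretely, I would first record the shape-preserving correspondence read off from the definition of complementation: the rules of ${\cal I}_{\cal J}$ are exactly the rules of ${\cal I}$ with every argument and value prefixed by $\vdash$, together with the rules of ${\cal J}$ with every argument and value prefixed by $\not\vdash$. Since no rule mixes the two turnstiles, a proof in ${\cal I}_{\cal J}$ whose root carries $\vdash$ must carry $\vdash$ at every node, and likewise for $\not\vdash$. Relabeling each node $B$ of a proof tree by $\vdash B$ (respectively $\not\vdash B$) therefore yields a bijection between proofs of $A$ in ${\cal I}$ (respectively in ${\cal J}$) and proofs of $\vdash A$ (respectively $\not\vdash A$) in ${\cal I}_{\cal J}$, and this bijection preserves finiteness of the tree.

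With this correspondence in hand, I would argue by a case analysis on whether $A$ has a finite proof in ${\cal I}$. If it does, decorating that finite proof with $\vdash$ produces a finite proof of $\vdash A$ in ${\cal I}_{\cal J}$, which is the first disjunct. If it does not, then by Proposition~\ref{lemmacompleteness} the proposition $A$ has a (possibly) infinite proof in ${\cal J}$; decorating that proof with $\not\vdash$ produces a (possibly) infinite proof of $\not\vdash A$ in ${\cal I}_{\cal J}$, which is the second disjunct. The two cases are exhaustive, so one of the two disjuncts always holds. Completeness of ${\cal I}_{\cal J}$ follows at once, since every element of ${\cal S}'$ has the form $\vdash A$ or $\not\vdash A$ for some $A$ in ${\cal S}$, and we have exhibited a proof of the appropriate sequent.

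I do not expect any serious obstacle here: the mathematical content has already been absorbed into Lemma~\ref{complement2}, Lemma~\ref{complement}, and their combination in Proposition~\ref{lemmacompleteness}, so the present theorem is essentially a repackaging. The only point that deserves explicit care is the claim that the turnstile decoration stays uniform across a whole proof tree, which is exactly what guarantees that the two relabelings are well defined and mutually inverse; this is immediate from the fact that each rule of ${\cal I}_{\cal J}$ has either all-$\vdash$ premises and a $\vdash$ conclusion, or all-$\not\vdash$ premises and a $\not\vdash$ conclusion.
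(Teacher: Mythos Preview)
Your proposal is correct and follows the same approach as the paper, which simply writes ``By Proposition~\ref{lemmacompleteness}.'' You have merely made explicit the turnstile-relabeling bijection that the paper leaves implicit in invoking that proposition.
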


\begin{proof}
By Proposition \ref{lemmacompleteness}.
\end{proof}

\begin{example}
\label{example2}
The sequent $\vdash P(a)$ does not have a finite proof in the system
${\cal I}_{\cal J}$ of Example \ref{example1}, 
and the sequent $\not\vdash P(a)$ has an infinite proof
$${\small
\hspace{6cm}\irule{\irule{\irule{\irule{\irule{...}
                                    {\not\vdash P(aaa)}
                                    {}
                                  ~~~~~
                               \irule{}{\not\vdash U(aa)}{}
                                 ~~~~~
                               \irule{}{\not\vdash V(aa)}{}
                             }
                             {\not\vdash Q(aa)}
                             {}
                       ~~~~~~~~~~~~~~~~~~~~~
                       \irule{}
                             {\not\vdash S(aa)}
                             {}
                      }
                      {\not\vdash P(aa)}
                      {}
                  ~~~~~~~~~~~~~~~~~~~~
                 \irule{}{\not\vdash U(a)}{}
                  ~~~~~
                 \irule{}{\not\vdash V(a)}{}
               }
               {\not\vdash Q(a)}
               {}
         ~~~~~~~~~~~~~~~~~~~~~~~~~~
         \irule{}
               {\not\vdash S(a)}
               {}
        }
        {\not\vdash P(a)}
        {}}$$
\end{example}

\section{From infinite to finite proofs}
\label{Sc:Infinite-to-Finite}

\begin{definition}[Introduction rule, Automaton] 
Consider a well-founded order $\prec$ on ${\cal S}$. 
A rule $r$ is said to be an {\em introduction} rule with respect to 
$\prec$,
if 
when a proposition $B$ is 
derivable from premises $A_1$, ..., $A_n$ with the rule $r$, 
we have $A_1 \prec B$, ..., $A_n \prec B$.  A {\em automaton}
is an inference system finite in conclusions, 
containing introduction rules only.
\end{definition}

\begin{proposition}
A complement ${\cal B}$ of an automaton ${\cal A}$ is an automaton as well.  
\end{proposition}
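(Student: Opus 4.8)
The plan is to show that ${\cal B}$ is an automaton with respect to the very same well-founded order $\prec$ that witnesses that ${\cal A}$ is an automaton. Two things must be checked: that ${\cal B}$ is finite in conclusions, and that every rule of ${\cal B}$ is an introduction rule with respect to $\prec$.

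First I would dispatch the finiteness requirement, which holds for free: by the definition of complement, any complement of a system finite in conclusions is itself required to be finite in conclusions. So the whole content of the proposition lies in the second property, and I would concentrate on that.

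Next I would analyse an arbitrary rule of ${\cal B}$. Fix a proposition $B$ and let $\langle A^1_1, \ldots, A^1_{n_1} \rangle, \ldots, \langle A^p_1, \ldots, A^p_{n_p} \rangle$ be all the sequences of premises from which $B$ can be derived with a rule of ${\cal A}$. By the definition of complement, any sequence of premises from which $B$ is derivable in ${\cal B}$ has the form $\langle A^1_{j_1}, \ldots, A^p_{j_p} \rangle$ for some choice of indices $j_1, \ldots, j_p$. The crucial observation is that each component $A^i_{j_i}$ of such a sequence is itself a premise of a rule of ${\cal A}$ whose conclusion is $B$. Since ${\cal A}$ is an automaton with respect to $\prec$, we have $A^i_{j_i} \prec B$ for every $i$. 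Hence every premise of every ${\cal B}$-rule deriving $B$ is $\prec B$, which is exactly the condition that this rule is an introduction rule with respect to $\prec$.

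I do not expect a genuine obstacle here; the argument is essentially the remark that complementation never introduces premises that did not already occur, as premises, in the original automaton, so reusing the order $\prec$ suffices. The only point requiring a little care is the degenerate case $p = 0$, where $B$ is not the conclusion of any rule of ${\cal A}$: then the only sequence of the prescribed form is the empty sequence, so $B$ is derivable in ${\cal B}$ by an axiom, and the introduction-rule condition holds vacuously. Collecting these observations shows that ${\cal B}$, being finite in conclusions and containing introduction rules only with respect to $\prec$, is an automaton.
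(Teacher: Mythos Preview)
Your proof is correct and is precisely the unpacking of what the paper records in one line (``By definition of the notion of complement''): the complement is finite in conclusions by hypothesis, and every premise appearing in a ${\cal B}$-rule with conclusion $B$ is already a premise of some ${\cal A}$-rule with conclusion $B$, hence $\prec B$. Your treatment of the degenerate case $p=0$ is a nice touch but not strictly needed, since the introduction-rule condition is vacuous for axioms.
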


\begin{proof}
By definition of the notion of complement.
\end{proof}

\begin{proposition}
\label{finiteinfinite}
Let ${\cal B}$ be an automaton. Then, the least fixed-point and the 
greatest fixed-point of $F_{\cal B}$ are the same. That is, a
proposition has a (possibly) infinite proof in ${\cal B}$ if and only if it 
has a finite proof in ${\cal B}$.
\end{proposition}

\begin{proof}
As the order $\prec$ is well-founded every branch in a proof is finite.
\end{proof}

\begin{theorem}[Finite and infinite proofs]
\label{main}
Let ${\cal I}$ be a system such that 
\begin{itemize}
\item ${\cal I}$ has a complement ${\cal J}$, 
\item ${\cal I}$ is equivalent to an automaton ${\cal A}$, 
\item the automaton ${\cal A}$ has a complement ${\cal B}$. 
\end{itemize}
Then, a sequent has a proof in 
${\cal I}_{\cal J}$---finite if the sequent 
has the form 
$\vdash A$ and (possibly) infinite if it has the form $\not\vdash A$---if 
and only it has a finite proof in ${\cal A}_{\cal B}$.
\end{theorem}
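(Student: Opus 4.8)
The plan is to split on the form of the sequent and to reduce each case to provability in the four underlying systems ${\cal I}$, ${\cal J}$, ${\cal A}$, ${\cal B}$, then to chain the equivalences already available in the excerpt. The first fact I would record is a \emph{separation} property of any complementation ${\cal I}_{\cal J}$: the rules coming from ${\cal I}$ send $\vdash$-premises to a $\vdash$-conclusion, while the rules coming from ${\cal J}$ send $\not\vdash$-premises to a $\not\vdash$-conclusion. Consequently a proof of a sequent $\vdash A$ mentions only $\vdash$-sequents and a proof of $\not\vdash A$ mentions only $\not\vdash$-sequents. Erasing the turnstiles then yields a bijection, preserving the finite/infinite distinction, between proofs of $\vdash A$ in ${\cal I}_{\cal J}$ and proofs of $A$ in ${\cal I}$, and between proofs of $\not\vdash A$ in ${\cal I}_{\cal J}$ and proofs of $A$ in ${\cal J}$; the same holds for ${\cal A}_{\cal B}$ relative to ${\cal A}$ and ${\cal B}$.

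For the case $\vdash A$, I would argue that a finite proof of $\vdash A$ exists in ${\cal I}_{\cal J}$ iff $A$ has a finite proof in ${\cal I}$ (by separation), iff $A$ has a finite proof in ${\cal A}$ (by the assumed equivalence of ${\cal I}$ and ${\cal A}$), iff $\vdash A$ has a finite proof in ${\cal A}_{\cal B}$ (by separation again).

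For the case $\not\vdash A$, I would run a longer chain. A (possibly) infinite proof of $\not\vdash A$ in ${\cal I}_{\cal J}$ exists iff $A$ has a (possibly) infinite proof in ${\cal J}$ (separation), iff $A$ has no finite proof in ${\cal I}$ (Proposition \ref{lemmacompleteness}), iff $A$ has no finite proof in ${\cal A}$ (equivalence of ${\cal I}$ and ${\cal A}$), iff $A$ has a (possibly) infinite proof in ${\cal B}$ (Proposition \ref{lemmacompleteness} applied to the pair ${\cal A}$, ${\cal B}$), iff $A$ has a finite proof in ${\cal B}$ (Proposition \ref{finiteinfinite}, since ${\cal B}$ is an automaton), iff $\not\vdash A$ has a finite proof in ${\cal A}_{\cal B}$ (separation). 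The appeal to Proposition \ref{finiteinfinite} for both automata ${\cal A}$ and ${\cal B}$ also shows, through separation, that every proof in ${\cal A}_{\cal B}$ is in fact finite, which is why the finiteness qualifier on the ${\cal A}_{\cal B}$ side is automatic.

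I do not expect any single implication to be hard, since each is a direct appeal to a result stated earlier. The main care is in making the separation property precise---checking that the correspondence between proofs in a complementation and proofs in its two components is a faithful bijection that respects finiteness---and in confirming that the notion of equivalence between ${\cal I}$ and ${\cal A}$ is exactly the equality of their finitely provable propositions, so that the two chains splice together cleanly. Once separation is established, the theorem is precisely the concatenation of the two equivalence chains above.
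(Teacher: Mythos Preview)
Your proposal is correct and follows essentially the same route as the paper's proof, which is extremely terse: it simply states that, by the hypotheses and Proposition~\ref{lemmacompleteness}, the systems ${\cal I}_{\cal J}$ and ${\cal A}_{\cal B}$ are equivalent, and then invokes Proposition~\ref{finiteinfinite} to conclude that all proofs in ${\cal A}_{\cal B}$ are finite. Your separation property and the two explicit equivalence chains are precisely the details one must fill in to justify that one-line claim of equivalence, so you have not diverged from the paper's argument but rather unpacked it.
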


\begin{proof}
By the hypotheses of the theorem and Proposition \ref{lemmacompleteness}, 
the systems ${\cal I}_{\cal J}$ and 
${\cal A}_{\cal B}$ are equivalent. By Proposition \ref{finiteinfinite},
all proofs are finite in ${\cal A}_{\cal B}$.
\end{proof}

\begin{example}
\label{example3}
The inference system ${\cal I}$, defined in Example \ref{example1},
is equivalent to the automaton ${\cal A}$ containing the rules
$${\scriptsize 
\begin{array}{lllll}
\irule{U(x)~T(x)}{P(ax)}{}
~~~~~~~~~~~~~~
&
\irule{V(x)~T(x)}{P(ax)}{}
~~~~~~~~~~~~~~
&
\irule{U(x)}{Q(a x)}{}
~~~~~~~~~~~~~~
&
\irule{V(x)}{Q(a x)}{}
~~~~~~~~~~~~~~
&
\irule{T(x)}{R(a x)}{}
\\
\irule{}{T(x)}{}
\end{array}}$$
A complement ${\cal B}$ of ${\cal A}$ contains the rules
$${\scriptsize 
\begin{array}{lllll}
\irule{U(x)~V(x)}{ P(a x)}{}
~~~~~~~~~~~~~~
&
\irule{ U(x)~ T(x)}{ P(a x)}{}
~~~~~~~~~~~~~~
&
\irule{ T(x)~ V(x)}{ P(a x)}{}
~~~~~~~~~~~~~~
&
\irule{ T(x)}{ P(a x)}{}
~~~~~~~~~~~~~~
& 
\irule{}{ P(\varepsilon)}{}
\\
\\
\irule{ U(x)~ V(x)}{ Q(a x)}{}
~~~~~~~~~~~~~~
&
\irule{}{ Q(\varepsilon)}{}
~~~~~~~~~~~~~~
&
\irule{ T(x)}{ R(a x)}{}
~~~~~~~~~~~~~~
&
\irule{}{ R(\varepsilon)}{}
~~~~~~~~~~~~~~
&
\irule{}{S(x)}{}
\\
\irule{}{U(x)}{}
~~~~~~~~~~~~~~
&
\irule{}{ V(x)}{}
\end{array}}$$
Thus, 
the system  ${\cal A}_{\cal B}$, that is 
a complementation of ${\cal A}$, contains the rules 
$${\scriptsize 
\begin{array}{lllll}
\irule{\vdash U(x)~\vdash T(x)}{\vdash P(ax)}{}
~~~~~~~~~~~~~~
&
\irule{\vdash V(x)~\vdash T(x)}{\vdash P(ax)}{}
~~~~~~~~~~~~~~
&
\irule{\vdash U(x)}{\vdash Q(a x)}{}
~~~~~~~~~~~~~~
&
\irule{\vdash V(x)}{\vdash Q(a x)}{}
~~~~~~~~~~~~~~
&
\irule{\vdash T(x)}{\vdash R(a x)}{}
\\
\\
\irule{}{\vdash T(x)}{}
\\
\\
\irule{\not\vdash U(x)~\not\vdash V(x)}{\not\vdash P(a x)}{}
~~~~~~~~~~~~~~
&
\irule{\not\vdash U(x)~\not\vdash T(x)}{\not\vdash P(a x)}{}
~~~~~~~~~~~~~~
&
\irule{\not\vdash T(x)~\not\vdash V(x)}{\not\vdash P(a x)}{}
~~~~~~~~~~~~~~
&
\irule{\not\vdash T(x)}{\not\vdash P(a x)}{}
~~~~~~~~~~~~~~
& 
\irule{}{\not\vdash P(\varepsilon)}{}
\\
\\
\irule{\not\vdash U(x)~\not\vdash V(x)}{\not\vdash Q(a x)}{}
~~~~~~~~~~~~~~
&
\irule{}{\not\vdash Q(\varepsilon)}{}
~~~~~~~~~~~~~~
&
\irule{\not\vdash T(x)}{\not\vdash R(a x)}{}
~~~~~~~~~~~~~~
&
\irule{}{\not\vdash R(\varepsilon)}{}
~~~~~~~~~~~~~~
&
\irule{}{\not\vdash S(x)}{}
\\
\\
\irule{}{\not\vdash U(x)}{}
~~~~~~~~~~~~~~
&
\irule{}{\not\vdash V(x)}{}
\end{array}}$$
It is easy to check that the sequent $\not\vdash P(a)$, which has an
infinite proof in ${\cal I}_{\cal J}$, has the finite proof in ${\cal A}_{\cal B}$
$${\small \irule{\irule{}{\not\vdash U(\varepsilon)}{}
         ~~~
         \irule{}{\not\vdash V(\varepsilon)}{}
        }
        {\not\vdash P(a)}
        {}}$$
\end{example}

\section{An algorithm to build infinite proofs}
\label{Sc:Finite-to-Infinite}

It follows from Theorem \ref{main} that, 
if a sequent $\not\vdash A$ has a finite proof in the system ${\cal A}_{\cal B}$,
then it has a (possibly) infinite proof in the system ${\cal I}_{\cal J}$. 
This proof can be effectively constructed, using 
the decidability of ${\cal A}_{\cal B}$. 

For instance, consider the systems of Examples \ref{example1} and
\ref{example3}, the sequent $\not\vdash P(a)$ has a proof in ${\cal
A}_{\cal B}$.  Thus, it must have a (possibly) infinite proof in
${\cal I}_{\cal J}$.  An analysis of the rules of
${\cal I}_{\cal J}$ shows that the last rule 
of this proof is either 
$${\small \irule{\not\vdash Q(a)~\not\vdash S(a)}{\not\vdash P(a)}{}}$$
or 
$${\small \irule{\not\vdash R(a)~\not\vdash S(a)}{\not\vdash P(a)}{}}$$

\noindent
Using the decidability of ${\cal A}_{\cal B}$, we get that the sequents
$\not\vdash Q(a)$ and $\not\vdash S(a)$ have proofs 
in ${\cal A}_{\cal B}$, but the sequent $\not\vdash R(a)$ does not.
So, the last rule of the proof 
in ${\cal I}_{\cal J}$ must be the first one of these two rules. 
We can start the proof with this rule
$${\small \irule{\irule{...}{\not\vdash Q(a)}{}~~~\irule{...}{\not\vdash S(a)}{}}
        {\not\vdash P(a)}
        {}}$$
Then, we consider the last rules used in 
${\cal I}_{\cal J}$ to prove, respectively, the sequent $\not\vdash Q(a)$ and $\not\vdash S(a)$, and
so on.  Co-inductively applying the same procedure to the premises of
the rules of the system ${\cal I}_{\cal J}$, selected by the
decidability of ${\cal A}_{\cal B}$, yields the infinite proof given
in Example \ref{example2}.

This method---co-inductively building a (possibly) infinite proof in 
${\cal I}_{\cal J}$ from a finite one in ${\cal A}_{\cal B}$, using the 
decidability of ${\cal A}_{\cal B}$---is effective but very inefficient, 
as we need to use the full decision algorithm for ${\cal A}_{\cal B}$ at each 
step of the construction of the proof in ${\cal I}_{\cal J}$.

In the remainder of this section, we develop a more efficient
algorithm to build such a proof, in some particular cases.
This algorithm is presented as a constructive proof of the statement:
if a sequent has a proof in ${\cal A}_{\cal B}$, then it has a
co-inductive proof in ${\cal I}_{\cal J}$ (Theorem \ref{theo}).

\subsection{Saturated systems}

As in the previous section, we consider a well-founded order $\prec$ on 
propositions and we classify the rules into introduction rules 
and non-introduction rules. 
For each non-introduction rule, we chose to classify the premises into
{\em major} and {\em non-major premises}, 
in such a way that each rule has at least one major premise, 
and we say that a proof is a 
{\em cut} when it ends with a non-introduction rule whose major premises
are all proved with proofs ending with an introduction rule.

\begin{definition}[Cut]
A {\em cut} is a proof of the form
$${\small 
  \irule{\irule{\irule{\rho^1_1}{A^1_1}{}~~...~~\irule{\rho^1_{k_1}}{A^1_{k_1}}{}}
               {A^1}
               {\mbox{intro}}
         ~~~~~~~~~~~~~~~~~~...~~~~~~~~~~
         \irule{\irule{\rho^m_1}{A^m_1}{}~~...~~\irule{\rho^m_{k_m}}{A^m_{k_m}}{}}
               {A^m}
               {\mbox{intro}}
          ~~~~~~~~~~~~~~~~~
         \irule{\pi_{m+1}}{A^{m+1}}{}
         ~~...~~
         \irule{\pi_n}{A^n}{}
        }
        {A}
        {\mbox{non-intro}}}$$
where $A^1, ..., A^m$ are the major premises of the 
non-intro rule.

A proof {\em contains a cut} if one of its subtrees is a cut. 
\end{definition}

\begin{proposition}[Cut free proofs]
\label{key}
A proof is cut-free if and only if it contains introduction rules only. 
\end{proposition}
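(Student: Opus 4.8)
The plan is to prove the two implications separately, the right-to-left one being immediate from the definitions and the left-to-right one requiring a short well-foundedness argument.

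For the direction stating that a proof containing introduction rules only is cut-free, I would simply invoke the definition of a cut: every cut ends with a non-introduction rule, so if the proof uses introduction rules only, none of its subtrees can be a cut, and the proof is cut-free. Nothing beyond the definitions is needed here.

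For the converse I would argue by contraposition, showing that a proof $\pi$ containing at least one non-introduction rule necessarily contains a cut. First I would collect the subtrees of $\pi$ whose last rule is a non-introduction rule; by assumption this collection is non-empty. Since $\pi$ is finite, the subtree relation on it is well-founded, so I can choose a subtree $\sigma$ that is minimal in this collection, meaning that no proper subtree of $\sigma$ ends with a non-introduction rule. I would then observe that each immediate subproof of $\sigma$, that is, the subproof deriving each premise of the last rule of $\sigma$, must end with an introduction rule; otherwise that subproof would be a proper subtree of $\sigma$ ending with a non-introduction rule, contradicting the minimality of $\sigma$. In particular the major premises of the last rule of $\sigma$ are all derived by proofs ending with an introduction rule, which is exactly the shape required by the definition of a cut. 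Hence $\sigma$ is a cut and $\pi$ is not cut-free.

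The step demanding the most care is the choice of the minimal subtree $\sigma$, and this is precisely where the well-foundedness of the subtree relation---equivalently, the finiteness of the proof---enters. I would stress that this hypothesis is genuinely needed and not a mere convenience: for an infinite proof the statement can fail, since one can build an infinite branch of non-introduction rules in which the major premise of each node is again derived by a non-introduction rule, producing a proof that uses non-introduction rules yet contains no cut. Once $\sigma$ has been isolated, the remainder is only a matter of reading off from its shape that all premises, hence in particular all major premises, are introduction-derived, so that $\sigma$ fits the definition of a cut.
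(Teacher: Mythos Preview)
Your proof is correct. The paper itself does not give a proof of this proposition but simply cites the companion paper \cite{DJ}, so there is no in-paper argument to compare against. Your argument---trivial in one direction, and in the other selecting a subtree-minimal occurrence of a non-introduction rule and observing that all its immediate subproofs (hence in particular those of the major premises) must end with introduction rules---is the natural one and is presumably what the reference contains. Your observation that finiteness of the proof is genuinely required, together with the counterexample of an infinite branch along major premises each derived by a non-introduction rule, is a worthwhile clarification that the paper leaves implicit; indeed the proposition is only invoked later (in Proposition~\ref{cutelimination}) in a context where proofs are finite.
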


\begin{proof}
See \cite{DJ}.
\end{proof}

\begin{definition}[Saturated system]
An inference system is said to be {\em saturated} if each time it contains 
\begin{itemize}
\item
a non-introduction rule $g$ with $n$ premises such that the $m$ 
leftmost premises are major and 
\item
$m$ introduction rules $f_1$, ..., $f_m$ 
\end{itemize}
it also contains
the simplification of the derivable rule obtained by composing 
$g$ with $f_1, ..., f_m$ and $n - m$ times the identity function.
\end{definition}

\noindent
In this section, we only consider inference systems ${\cal I}$ that are
included in and equivalent to a saturated system ${\cal I}'$.

\begin{proposition}[Cut elimination]
\label{cutelimination}
Let ${\cal I}'$ be a saturated system. Each proof in ${\cal I}'$ can 
be reduced to a cut free proof,
that contains introduction rules only.
\end{proposition}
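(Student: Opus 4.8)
The plan is to prove cut elimination by a well-founded induction, using the saturation condition as the engine that lets me rewrite any single cut into a proof whose cuts are strictly smaller in an appropriate measure. First I would fix a proof $\pi$ in the saturated system ${\cal I}'$ and, if it is not already cut-free, locate a topmost cut: a subproof that ends with a non-introduction rule $g$ all of whose major premises $A^1, \ldots, A^m$ are concluded by introduction rules $f_1, \ldots, f_m$. The key local step is to invoke saturation: since ${\cal I}'$ contains $g$ (with its $m$ leftmost premises major) together with the introduction rules $f_1, \ldots, f_m$, it also contains the simplification of the derivable rule $h$ obtained by composing $g$ with $f_1, \ldots, f_m$ and $n-m$ copies of the identity. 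I would then replace the cut by a single application of $h$ directly to the premises $\langle A^1_1, \ldots, A^1_{k_1}, \ldots, A^m_1, \ldots, A^m_{k_m}, A^{m+1}, \ldots, A^n\rangle$, reusing the existing subproofs $\rho^i_j$ and $\pi_\ell$ unchanged. This yields a proof of the same conclusion $A$ in ${\cal I}'$ in which the offending non-introduction rule has been pushed one layer up (its former major premises have been absorbed into a single rule application).

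The main work is then to package this local rewrite into a terminating global procedure. I would introduce a well-founded measure on proofs that strictly decreases under the rewrite above. A natural candidate builds on the well-founded order $\prec$: each application of $h$ has a conclusion $A$ with all premises $\prec A$ (so $h$ is itself an introduction rule, since $g$'s non-major premises and the $f_i$'s all produce premises below their conclusions), which is precisely why the transformed local piece is cut-free at its root. The subtlety is that eliminating one cut can create new cuts elsewhere, so the measure cannot simply count cuts. I would instead order proofs multiset-wise by the $\prec$-heights of the conclusions of their non-introduction rule applications, or argue by induction on the conclusion $A$ under $\prec$ together with an inner induction on the structure of the subproof feeding the topmost cut. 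Since saturation guarantees the replacement rule is an introduction rule, every newly introduced non-introduction instance sits strictly lower in $\prec$, giving a genuine decrease.

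The hard part will be verifying that the measure is genuinely well-founded and that the rewrite strictly decreases it even in the presence of the non-major premises $A^{m+1}, \ldots, A^n$, whose subproofs $\pi_\ell$ are carried along untouched and may themselves contain cuts at arbitrary $\prec$-levels. I would handle this by processing cuts in a $\prec$-decreasing order: first normalize the subproofs of all premises so that each concludes with an introduction rule wherever possible (an inner induction), and only then apply the saturation step at the root. Because the major premises of $g$ satisfy $A^i \prec A$ and the composed rule $h$ is an introduction rule, the conclusion $A$ is produced without any non-introduction rule at its root, so the $\prec$-level $A$ no longer hosts a cut after the step. An appeal to Proposition \ref{key} then closes the argument: a proof with no cut contains introduction rules only, which is exactly the claimed cut-free form. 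I would finally note that the construction is effective, matching the paper's later use of cut elimination as an algorithmic device.
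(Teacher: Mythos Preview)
Your local rewriting step is exactly right: saturation guarantees that the composed-and-simplified rule $h$ lies in ${\cal I}'$, so the cut can be collapsed into a single application of $h$ over the premises $A^i_j$ and $A^{m+1},\ldots,A^n$. The gap is in your termination argument, which rests on several claims about $\prec$ that the definitions do not support.

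You assert that the non-major premises of $g$ lie strictly below the conclusion, and later that the major premises satisfy $A^i \prec A$; from this you conclude that $h$ is an introduction rule. None of these relations is guaranteed. The split into major and non-major premises is an arbitrary choice subject only to the constraint that at least one premise is major; nothing ties it to $\prec$. In the paper's own running example, composing the non-introduction rule with conclusion $P(x)$ against the introduction rules for $Q(ax)$ and $R(ax)$ produces the derived rules $\irule{U(x)~T(x)}{Q(x)}{}$ and $\irule{V(x)~T(x)}{Q(x)}{}$, which are neutral (hence non-introduction) rules of ${\cal I}'$ and do \emph{not} belong to the automaton ${\cal A}$. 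So $h$ need not be an introduction rule, and the rewrite can replace one non-introduction instance with conclusion $A$ by another non-introduction instance with the same conclusion $A$. Your proposed multiset of $\prec$-heights of non-introduction conclusions is therefore not decreased, and the outer $\prec$-induction on $A$ does not move either.

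The paper sidesteps all of this with a one-line measure: the total number of rule applications in the proof. Replacing $g$ together with the $m\geq 1$ introduction rules above its major premises by the single rule $h$ removes at least $m$ nodes (simplification can only remove further duplicated subproofs), so the size strictly drops at every step. Iterating terminates, and Proposition~\ref{key} then gives that the resulting cut-free proof uses introduction rules only. Your elaborate $\prec$-based machinery is not needed, and in the form stated it does not work.
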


\begin{proof}
We can replace each cut with a derivable rule
of ${\cal I}'$, because ${\cal I}'$ is saturated. 
As the number of rules in the proof decreases, this process terminates.
\end{proof}

\begin{proposition}[Automaton]
Let ${\cal I}'$ be a saturated system. Then, 
the automaton ${\cal A}$ formed
with the introduction rules of ${\cal I}'$ is equivalent to ${\cal I}'$.
\end{proposition}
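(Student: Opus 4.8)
The plan is to show that the set of propositions provable in ${\cal A}$ and the set provable in ${\cal I}'$ coincide, which is what equivalence means here (for finite proofs, the only kind ${\cal A}$ admits since it is an automaton, by Proposition~\ref{finiteinfinite}). I would prove the two inclusions separately.

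One inclusion is immediate. By construction ${\cal A}$ consists of exactly the introduction rules of ${\cal I}'$, so as a set of rules ${\cal A} \subseteq {\cal I}'$. Consequently every proof in ${\cal A}$ is already, verbatim, a proof in ${\cal I}'$, and whatever is provable in ${\cal A}$ is provable in ${\cal I}'$.

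For the converse, I would start from a finite proof of a proposition $A$ in ${\cal I}'$ and apply Proposition~\ref{cutelimination}, which reduces it to a cut-free proof of the same proposition $A$. By Proposition~\ref{key}, a cut-free proof contains introduction rules only. It then remains to verify that these introduction rules are genuinely rules of ${\cal A}$, that is, introduction rules of ${\cal I}'$ and not foreign rules: this holds because each step of cut elimination replaces a cut by a derivable rule that saturation guarantees is already present in ${\cal I}'$, so no rule outside ${\cal I}'$ is ever introduced. Hence the surviving rules, being introduction rules of ${\cal I}'$, are precisely the rules of ${\cal A}$, and the cut-free proof is a proof of $A$ in ${\cal A}$.

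The hard part has in fact been delegated to the two earlier propositions: cut elimination (Proposition~\ref{cutelimination}) and the characterization of cut-free proofs as exactly those built from introduction rules (Proposition~\ref{key}). The only point in the present argument that requires care is the bookkeeping observation that cut elimination stays inside ${\cal I}'$ and never manufactures an introduction rule lying outside ${\cal A}$ — which is precisely the content of saturation. Once this is noted, the two inclusions combine to yield the claimed equivalence of ${\cal A}$ and ${\cal I}'$.
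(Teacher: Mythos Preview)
Your proposal is correct and follows essentially the same approach as the paper, which simply cites Propositions~\ref{key} and~\ref{cutelimination}. You have spelled out both inclusions and the bookkeeping point that cut elimination stays inside ${\cal I}'$, but these are exactly the considerations implicit in the paper's two-line proof.
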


\begin{proof}
By Propositions \ref{key} and \ref{cutelimination}.

\end{proof}
The relation between the inference systems ${\cal I}$, 
${\cal I}_{\cal J}$, ${\cal I}'$, ${\cal A}$ and 
${\cal A}_{\cal B}$ is depicted in the diagram below.
{\small
\begin{diagram}[height=2em,width=2.5em]
{\cal I}&\rTo^{\mbox{\hspace{-1cm}{\scriptsize$\begin{array}{c}
\mbox{inclusion}\\
\mbox{equivalence}
\end{array}$}}}&{\cal I}'&\rTo^{\mbox{\scriptsize selection 
\hspace{-0.8cm}}}&{\cal A}\\
\dTo^{\mbox{\scriptsize complementation}}&&&&\dTo_{\mbox{\scriptsize complementation}}\\
{\cal I}_{\cal J}&&&&{\cal A}_{\cal B}\\
\end{diagram}}

\begin{example}
\label{example4}
The system ${\cal I}$ of Example \ref{example1} is included in the 
saturated system ${\cal I}'$ containing the rules of ${\cal I}$ and the 
following 
rules 
$${\small 
\begin{array}{llllll}
\irule{U(x)~T(x)}{P(ax)}{}
~~~~~~~~~~~~~~
& 
\irule{V(x)~T(x)}{P(ax)}{}
~~~~~~~~~~~~~~
&
\irule{U(x)~T(x)}{Q(x)}{}
~~~~~~~~~~~~~~
&
\irule{V(x)~T(x)}{Q(x)}{}
\end{array}}$$
Selecting the introduction rules of ${\cal I}'$
yields the automaton ${\cal A}$ of Example \ref{example3}. 
\end{example}

\subsection{Rank of a rule in a finite saturated system}

\begin{lemma}
\label{aaa}
Consider a non-introduction rule $r$, a derivable rule $r'$ obtained 
from $r$ and introduction rules, a proposition $A$, and a sequence 
$\langle D_1, ..., D_n\rangle$ of premises 
from which $A$ can be derived with $r'$. 
Then, there exists a sequence $\langle C_1, ..., C_p\rangle$ of premises
from which $A$ can be derived with $r$ and 
$\{D_1, ..., D_n\} \prec_{DM} \{C_1, ..., C_p\}$, where 
$\prec_{DM}$ is the multiset extension of $\prec$ \cite{DM}.
\end{lemma}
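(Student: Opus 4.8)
The plan is to first peel off the outermost introduction rules used to build $r'$, thereby recovering the premises of $r$, and then to read off the multiset inequality directly from the defining property of introduction rules. Write $r$ for a non-introduction rule with $p$ premises, so that $r'$ is, as in Definition~\ref{saturation}, the derivable rule obtained by composing $r$ with functions $f_1, \ldots, f_p$, each $f_i$ being either an introduction rule with $m_i$ premises or the identity (with $m_i = 1$), and with at least one $f_i$ a genuine introduction rule. This last point always holds in the saturation setting, since each non-introduction rule has at least one major premise and it is precisely the major premises that receive introduction rules. Given the derivation of $A$ from $\langle D_1, \ldots, D_n\rangle$ with $r'$, the definition of the derivable rule lets me split $\langle D_1, \ldots, D_n\rangle$ into consecutive blocks $\langle x^i_1, \ldots, x^i_{m_i}\rangle$, one per position $i$, with $\langle x^i_1, \ldots, x^i_{m_i}\rangle$ in the domain of $f_i$. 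Setting $C_i = f_i(x^i_1, \ldots, x^i_{m_i})$ yields $A = r(C_1, \ldots, C_p)$, so $A$ is derivable with $r$ from $\langle C_1, \ldots, C_p\rangle$; this settles the existence half of the statement.

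For the ordering, I would compare the two multisets blockwise. For each position $i$ let $B_i = \{x^i_1, \ldots, x^i_{m_i}\}$ be the multiset of the corresponding $D$'s, so that $\{D_1, \ldots, D_n\} = B_1 \uplus \cdots \uplus B_p$. If $f_i$ is the identity then $B_i = \{C_i\}$. If instead $f_i$ is an introduction rule, then, since $C_i$ is derived from $x^i_1, \ldots, x^i_{m_i}$ by $f_i$, the introduction property gives $x^i_j \prec C_i$ for every $j$, so every element of $B_i$ lies strictly below $C_i$; in the borderline case where $f_i$ is an axiom, $B_i$ is empty. Thus passing from $\{C_1, \ldots, C_p\}$ to $\{D_1, \ldots, D_n\}$ amounts to replacing each $C_i$ either by itself (identity positions) or by a multiset of elements all strictly smaller than $C_i$ (introduction positions).

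To conclude, I would invoke the Dershowitz--Manna characterization of $\prec_{DM}$. Let $I = \{ i \mid f_i \mbox{ is an introduction rule} \}$, take $X$ to be the sub-multiset $\{ C_i \mid i \in I \}$ of $\{C_1, \ldots, C_p\}$, and set $Y = \biguplus_{i \in I} B_i$. Then $\{D_1, \ldots, D_n\} = (\{C_1, \ldots, C_p\} \setminus X) \uplus Y$, and every element of $Y$ is $\prec$ some element of $X$, which gives $\{D_1, \ldots, D_n\} \prec_{DM} \{C_1, \ldots, C_p\}$. The step that must not be overlooked is the strictness: it requires $X \neq \varnothing$, and this is exactly where the assumption that at least one $f_i$ is a genuine introduction rule (the major premise) is used, the axiom case $B_i = \varnothing$ still counting as a strict replacement of $C_i$. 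I expect this interplay between the multiset order and the ``at least one major premise'' condition to be the only delicate point. (Simplification of the derivable rule merely identifies premises that are forced to be equal, hence only deletes copies from the left-hand multiset, which preserves $\prec_{DM}$ by transitivity, so it may be ignored.)
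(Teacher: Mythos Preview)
Your proof is correct and follows the same idea as the paper's own argument: the paper merely says that $\{D_1,\ldots,D_n\}$ is obtained from $\{C_1,\ldots,C_p\}$ by removing some premises and replacing them by propositions from which they are derived via introduction rules, hence strictly smaller. You have expanded this one-sentence sketch into a full argument, making explicit the blockwise decomposition, the Dershowitz--Manna witnesses $X$ and $Y$, the strictness coming from the existence of at least one major premise, and the harmlessness of simplification; none of these points is spelled out in the paper, so your version is more complete but not different in spirit.
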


\begin{proof}
By the construction of $r'$, the multiset $\{D_1, ..., D_n\}$ is
obtained by removing some premises in $\{C_1, ..., C_p\}$ and adding
propositions from which these premises can be derived with
introduction rules, which are strictly smaller
than the removed premises.
\end{proof}

\begin{lemma}
Let ${\cal I}'$ be a finite saturated set of rules and $r_1, r_2, ...,
r_k$ be a sequence of rules of ${\cal I}'$, such that each rule of
this sequence is a derivable rule obtained from the previous one and
introduction rules. Then, the length $k$ of this sequence is lower
than or equal to the number of rules in ${\cal I}'$.
\end{lemma}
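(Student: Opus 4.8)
The plan is to show that the rules $r_1, \ldots, r_k$ appearing in the sequence are pairwise distinct; since they all belong to the finite system ${\cal I}'$, this immediately gives $k \le |{\cal I}'|$. The engine of the argument is Lemma~\ref{aaa}, which controls how the premises of a derived rule shrink, measured in the multiset order $\prec_{DM}$.

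First I would record two closure facts about the relation ``$r'$ is a derivable rule obtained from $r$ and introduction rules''. This relation is transitive: composing introduction rules (and identities) again yields introduction rules, since $\prec$ is transitive, so if $r_{l+1}$ is derived from $r_l$ and $r_{l+2}$ from $r_{l+1}$, then $r_{l+2}$ is derived from $r_l$ via the composed introduction rules. Moreover each single step is strict in $\prec_{DM}$: a saturation step applies genuine introduction rules to all the major premises of $r_l$, and every rule has at least one major premise, so at least one premise is replaced by strictly smaller ones. Applying Lemma~\ref{aaa} along the chain $r_i, r_{i+1}, \ldots, r_j$ (each of $r_i, \ldots, r_{j-1}$ being a non-introduction rule, as required to form the next rule) I obtain, for every $i < j$, every conclusion $A$, and every premise-sequence $\langle D_1, \ldots, D_n\rangle$ of $A$ under $r_j$, a premise-sequence $\langle C_1, \ldots, C_p\rangle$ of $A$ under $r_i$ with $\{D_1, \ldots, D_n\} \prec_{DM} \{C_1, \ldots, C_p\}$.

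To deduce distinctness, fix $i < j$ and suppose, for contradiction, that $r_i = r_j$. Choose a conclusion $A$ that $r_j$ can actually derive, and let $\Pi(A)$ be the set of premise-multisets of $A$ under $r_j$. Because ${\cal I}'$ is finite in conclusions, $\Pi(A)$ is finite, so it has a maximal element $\{M_1, \ldots, M_q\}$ with respect to the strict order $\prec_{DM}$. Applying the consequence of Lemma~\ref{aaa} above to this maximal premise-multiset yields a premise-multiset of $A$ under $r_i$ strictly above it; but $r_i = r_j$ means this multiset also lies in $\Pi(A)$, contradicting maximality. Hence $r_i \ne r_j$, the rules of the sequence are distinct, and $k \le |{\cal I}'|$.

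The main obstacle is that a single rule is a partial function with many inputs, so it has no single ``premise-multiset'' that could decrease along the sequence, and the natural global invariant simply does not exist. The resolution is to localise to a fixed conclusion $A$: finiteness in conclusions makes the set of premise-multisets for $A$ finite, which is exactly what is needed to select a $\prec_{DM}$-maximal element and turn ``every premise-multiset strictly decreases'' into the clean contradiction $r_i \ne r_j$. A secondary point to treat carefully is the transitivity and strictness of the derivation relation, namely that composing saturation steps stays within the class covered by Lemma~\ref{aaa} and preserves at least one strict $\prec$-decrease, so that $\prec_{DM}$ remains a strict order along the whole chain.
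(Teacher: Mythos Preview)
Your proposal is correct and follows essentially the same approach as the paper: both reduce to showing $r_i \neq r_j$ for $i < j$, assume equality, invoke Lemma~\ref{aaa} (iterated along the chain) to obtain for any premise-multiset of a conclusion $A$ under $r_j$ a strictly $\prec_{DM}$-larger one under $r_i = r_j$, and then derive a contradiction from finiteness in conclusions. The only cosmetic difference is that the paper phrases the contradiction as an infinite strictly increasing chain of premise-multisets, whereas you pick a $\prec_{DM}$-maximal element of the finite set $\Pi(A)$ and exhibit something above it; these are equivalent formulations, and your explicit treatment of the iteration of Lemma~\ref{aaa} and of strictness is simply a more detailed rendering of what the paper leaves implicit.
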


\begin{proof}
All we need to prove is that, for all $i$ and $j$ such that $i < j$, one has $r_i \neq r_j$. 
Assume $r_i = r_j$. Consider a proposition $A$ and a sequence $\langle
D_1, ..., D_n\rangle$ of premises from which $A$ can be derived with
$r_j$.  By Lemma \ref{aaa}, there exists a sequence $\langle C_1,
..., C_p\rangle$ of premises from which $A$ can be derived with $r_i$
and $\{D_1, ..., D_n\} \prec_{DM} \{C_1, ..., C_p\} $.  As $r_i =
r_j$, we can build an infinite strictly increasing sequence of set of
premises from which $A$ can be proved with $r_i$, contradicting the
fact that $r_i$ is finite in conclusions.
\end{proof}

\begin{definition}[Rank]
Consider a finite saturated set of rules.  The rank of a rule $r$ in
this set is the length of the longest sequence of rules 
$r_1, r_2, ..., r_k$ starting from $r$, such that each of the rules
$r_2, ..., r_k$ is a non-introduction derivable rule obtained from the 
previous one and introduction rules.
\end{definition}

\subsection{Building the infinite proof}

\begin{proposition}\label{combinatorial}
Consider a natural number $n \geq 1$, $n$ families of sets 
$\langle H^1_1, ..., H^1_{k_1} \rangle$, ..., 
$\langle H^n_1, ..., H^n_{k_n} \rangle$ and 
a set $W$, such that each of the $k_1 \times ... \times 
k_n$ sets of the form 
$H^1_{j_1} \cup ... \cup H^n_{j_n}$ contains an element of $W$.
Then, there exists an index $l$, $1 \leq l \leq n$, such that each of 
the sets $H^l_1, ..., H^l_{k_l}$ contains an element of $W$.
\end{proposition}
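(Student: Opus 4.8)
The plan is to argue by contraposition: I would assume the conclusion fails and exhibit one of the hypothesised unions that omits every element of $W$.

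First I would spell out the negation of the conclusion. The statement ``there exists an index $l$ such that each of the sets $H^l_1, \ldots, H^l_{k_l}$ contains an element of $W$'' fails precisely when, for every index $l$ with $1 \le l \le n$, at least one set in the $l$-th family contains no element of $W$. So, for each $l$, I would fix an index $j_l$, with $1 \le j_l \le k_l$, such that $H^l_{j_l} \cap W = \varnothing$.

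Next I would assemble these independently chosen indices into the single selection $\langle j_1, \ldots, j_n \rangle$ and examine the associated union $H^1_{j_1} \cup \cdots \cup H^n_{j_n}$, which is one of the $k_1 \times \cdots \times k_n$ sets named in the hypothesis. Since each summand $H^l_{j_l}$ is disjoint from $W$, so is their union, and hence this particular union contains no element of $W$. This contradicts the hypothesis that every such union meets $W$, which completes the contrapositive and therefore the proposition.

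I do not expect a genuine obstacle here: the statement is a finitary quantifier exchange, turning ``for every selection, some member of that selection meets $W$'' into ``some family has all of its members meeting $W$'', and the exchange is licensed purely by the product structure of the selections. The only point requiring care is the bookkeeping, namely checking that the indices $j_l$, chosen separately for each family, really do combine into exactly one admissible selection $\langle j_1, \ldots, j_n \rangle$, so that the offending union is indeed among those covered by the hypothesis. The assumption $n \ge 1$ guarantees that there is at least one family from which to select, so no degenerate empty-selection case arises.
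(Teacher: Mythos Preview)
Your contrapositive argument is correct and is more direct than the paper's own proof. The paper proceeds by induction on $n$: for the step from $n$ to $n+1$ it fixes each value $j_{n+1}=1,\dots,k_{n+1}$ in turn, observes that either all of the remaining $k_1\times\cdots\times k_n$ unions already meet $W$ (so the induction hypothesis applies) or else $H^{n+1}_{j_{n+1}}$ itself meets $W$, and then collects these disjunctions over all $j_{n+1}$. You bypass the induction entirely by choosing, for each family $l$, a single witness $j_l$ to the failure and assembling these into one offending selection; this is exactly the finite product/choice step that the induction is unwinding. Both arguments are constructive: read procedurally, yours amounts to scanning the families $l=1,\dots,n$ and returning the first one all of whose members meet $W$, which is at least as simple as the recursive search implicit in the paper's induction. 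This matters here because the paper later uses the proposition algorithmically (to extract the index $l$ in the proof of Proposition~\ref{complement000} and in the worked example), and your argument supplies such an algorithm equally well.
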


\begin{proof}
By induction on $n$. 

If $n = 1$, then each of the sets $H^1_1$, ..., $H^1_{k_1}$ contains an 
element of $W$.

Then, assume the property holds for $n$ and consider 
$\langle H^1_1, ..., H^1_{k_1} \rangle$, ..., 
$\langle H^n_1, ..., H^n_{k_n} \rangle$, 
$\langle H^{n+1}_1, ..., H^{n+1}_{k_{n+1}} \rangle$ 
such that each of the $k_1 \times ... \times k_n \times k_{n+1}$ 
sets of the form 
$H^1_{j_1} \cup ... \cup H^n_{j_n} \cup H^{n+1}_{j_{n+1}}$ contains an element of 
$W$.
We have
\begin{itemize}
\item each of the $k_1 \times ... \times k_n$ sets of the form 
$(H^1_{j_1} \cup ... \cup H^n_{j_n}) \cup H^{n+1}_1$ contains an element of 
$W$, 
\item ..., 
\item each of the $k_1 \times ... \times k_n$ sets of the form 
$(H^1_{j_1} \cup ... \cup H^n_{j_n}) \cup H^{n+1}_{k_{n+1}}$ contains an element of 
$W$.
\end{itemize}
Thus, 
\begin{itemize}
\item either each of the $k_1 \times ... \times k_n$ sets of the form 
$H^1_{j_1} \cup ... \cup H^n_{j_n}$ contains an element of 
$W$ or $H^{n+1}_1$ contains an element of $W$, 
\item ..., 
\item either each of the $k_1 \times ... \times k_n$ sets of the form 
$H^1_{j_1} \cup ... \cup H^n_{j_n}$ contains an element of $W$ 
or $H^{n+1}_{k_{n+1}}$ contains an element of $W$.
\end{itemize}
Hence, 
\begin{itemize}
\item either 
each of the $k_1 \times ... \times k_n$ sets of the form 
$H^1_{j_1} \cup ... \cup H^n_{j_n}$ contains an element of $W$, 
\item or 
$H^{n+1}_1$ contains an element of $W$, ..., and $H^{n+1}_{k_{n+1}}$ contains an 
element of $W$. 
\end{itemize}
Therefore, either, by induction hypothesis, there exists an index $l \leq n$ 
such that each of the sets $H^l_1$, ..., $H^l_{k_l}$ contains an element of $W$,
or each of the sets $H^{n+1}_1$, ..., $H^{n+1}_{k_{n+1}}$ contains an 
element of $W$.
That is, there exists an index $l \leq n+1$ such that 
each of the sets $H^l_1$, ..., $H^l_{k_l}$ contains an element of $W$.
\end{proof}

\begin{proposition}
\label{complement000}
Let ${\cal I}$ be an inference system such that 
\begin{itemize}
\item ${\cal I}$ has a complement ${\cal J}$, 
\item ${\cal I}$ is included in and equivalent to a saturated system 
${\cal I}'$, 
\item and the automaton ${\cal A}$ obtained by 
selecting the introduction rules in ${\cal I}'$ has a complement 
${\cal B}$. 
\end{itemize}
Then, if $A$ is a proposition such that the sequent 
$\not\vdash A$ has a proof in ${\cal A}_{\cal B}$ and 
$\langle C_1, ..., C_n \rangle$ is a sequence of premises
from which $A$ is derivable with a rule of ${\cal I}$, 
then there exists an element $C_i$ in this sequence such that the sequent
$\not\vdash C_i$ has a proof in ${\cal A}_{\cal B}$. 
\end{proposition}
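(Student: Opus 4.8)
The plan is to first set up the bridge that lets me reason about provability in the single system ${\cal I}$ instead of about proofs in ${\cal A}_{\cal B}$. Since ${\cal I}$ is included in and equivalent to the saturated system ${\cal I}'$, and ${\cal A}$ is obtained by selecting the introduction rules of ${\cal I}'$, the Automaton proposition gives ${\cal I} \equiv {\cal I}' \equiv {\cal A}$. Combining this with Proposition \ref{lemmacompleteness} and Theorem \ref{completeness} applied to ${\cal A}$ and its complement ${\cal B}$, I would establish the key equivalence: for every proposition $X$, the sequent $\not\vdash X$ has a proof in ${\cal A}_{\cal B}$ if and only if $X$ has no finite proof in ${\cal I}$. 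This already yields a short semantic proof: reading the hypothesis as ``$A$ is not provable in ${\cal I}$'', if every premise $C_i$ were provable in ${\cal I}$ then, since the rule $f$ of ${\cal I}$ derives $A = f(C_1, \ldots, C_n)$, stacking $f$ on top of the finite proofs of the $C_i$ would give a finite proof of $A$, a contradiction; hence some $C_i$ is unprovable in ${\cal I}$, and by the same equivalence $\not\vdash C_i$ has a proof in ${\cal A}_{\cal B}$.

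However, because the surrounding section aims at an \emph{efficient} construction of the infinite proof, I would instead give a constructive argument that reads the witness $C_i$ directly off the rules, using Proposition \ref{combinatorial}. Writing $W = \{X \mid \not\vdash X \mbox{ has a proof in } {\cal A}_{\cal B}\}$, so that $W$ is the set of propositions with no finite proof in ${\cal A}$ (equivalently in ${\cal I}$), I know $A \in W$ and want some $C_i \in W$. I would split on $f$. If $f$ is an introduction rule, then $f \in {\cal A}$, and since $A$ is unprovable in ${\cal A}$ at least one premise of $f$ must be unprovable, giving $C_i \in W$ at once. If $f$ is a non-introduction rule, with major premises $C_1, \ldots, C_m$ say, I would use saturation to describe the introduction rules of ${\cal A}$ that derive $A$ from $f$: each is obtained by composing $f$ with introduction rules deriving the major premises. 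For each $C_i$, let $H^i_1, \ldots, H^i_{k_i}$ be the premise sets of the introduction rules that derive $C_i$; a choice of one such rule per major premise produces an ${\cal A}$-rule deriving $A$ whose premise set is $H^1_{j_1} \cup \cdots \cup H^m_{j_m}$ together with the non-major premises. From $A \in W$, every such ${\cal A}$-rule has an unprovable premise, so, after disposing of the easy case where a non-major premise already lies in $W$, every union $H^1_{j_1} \cup \cdots \cup H^m_{j_m}$ meets $W$. Proposition \ref{combinatorial} then delivers an index $i$ for which every $H^i_j$ meets $W$, that is, every introduction rule deriving $C_i$ has an unprovable premise, which is exactly $C_i \in W$.

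The hard part will be the non-introduction case, on two counts. First, composing $f$ with introduction rules for its major premises need not immediately yield introduction rules: the result may again be a non-introduction rule of positive rank, so I expect to induct on the rank of $f$, peeling off one saturation step at a time until genuine ${\cal A}$-rules are reached and feeding Proposition \ref{combinatorial} at each level. Second, I must check that the families $H^i_1, \ldots, H^i_{k_i}$ enumerate \emph{all} the introduction rules deriving the major premise $C_i$, since only then does ``every $H^i_j$ meets $W$'' certify that $C_i$ is genuinely unprovable in ${\cal A}$, rather than merely unprovable by the rules descending from $f$. Once these bookkeeping points are settled, the combinatorial lemma does the real work and the choice of $C_i$ becomes explicit, which is precisely what the efficient algorithm of the subsequent results requires.
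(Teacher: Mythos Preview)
Your proposal is correct and lands on essentially the same argument as the paper: induction on the rank of the rule deriving $A$, with the base case handled because rank-$0$ rules are introduction rules (hence rules of ${\cal A}$), and the inductive step carried out by composing with all introduction rules for the major premises, invoking the induction hypothesis on the resulting lower-rank derivable rules, and then applying Proposition~\ref{combinatorial} to extract a single major premise $C_l$ all of whose ${\cal A}$-derivations are blocked. The only organizational differences are cosmetic: the paper folds the non-major premises into the combinatorial lemma by setting $k_{m+1}=\cdots=k_n=1$ and $H^{m+1}_1=\{C_{m+1}\},\ldots,H^n_1=\{C_n\}$ rather than disposing of them by a preliminary case split as you do, and it takes $W$ to be the \emph{finite} set of witnesses produced by the induction hypothesis rather than the global set of ${\cal A}$-unprovable propositions; your semantic shortcut in the first paragraph is sound but, as you note, is not the constructive content the section is after.
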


\begin{proof}
By induction on the rank $r$ of the rule that permits to derive $A$
from $\langle C_1, ..., C_n \rangle$ in ${\cal I}$.
\begin{itemize}
\item If $r = 0$, 
the rule permitting to derive $A$ from $\langle C_1, ..., C_n \rangle$ 
is an introduction rule. It is also a rule of ${\cal A}$ and
by construction of ${\cal B}$,  $\langle C_1, ..., C_n \rangle$
contains an element $C_i$ such that the sequent $\not\vdash C_i$
has a proof in ${\cal A}_{\cal B}$. 

\item If $r > 0$, then let $m$ be such that the $m$ leftmost premises
are the major premises of the rule and consider the $k_1$
introduction rules of ${\cal I}'$ with the conclusion $C_1$ and
respective sets of premises $H^1_1$, ..., $H^1_{k_1}$, ..., the
$k_m$ introduction rules of ${\cal I}'$ with the conclusion $C_m$
and respective sets of premises $H^m_1$, ..., $H^m_{k_n}$.  Note
that all these rules are also rules of ${\cal A}$.  Let $k_{m+1} =
... = k_n = 1$ and $H^{m+1}_1 = \{C_{m+1}\}$, ..., $H^n_1 =
\{C_n\}$.  As the system ${\cal I}'$ is saturated it contains $k_1
\times ... \times k_n$ rules with the conclusion $A$ and sets of
premises of the form $H^1_{j_1} \cup ... \cup H^n_{j_n}$.  These
rules have a rank $< r$.  By induction hypothesis each of these $k_1
\times ... \times k_n$ sets contains a proposition $K$ such that
$\not\vdash K$ has a proof in ${\cal A}_{\cal B}$.  Let $W$ be the
finite set of these $k_1 \times ... \times k_n$ propositions.  By
Proposition \ref{combinatorial}, there exists an index $l$ such that
each $H^l_j$ contains a proposition $K^l_j$ in $W$. If $l \leq m$,
then the sequent $\not\vdash K^l_j$ has a proof in ${\cal A}_{\cal
B}$.  Hence, as ${\cal B}$ is a complement of ${\cal A}$,
$\not\vdash C_l$ has a proof in ${\cal A}_{\cal B}$. If $l > m$ then
$K^l_1 = C_l$ and $\not\vdash C_l$ has a proof in ${\cal A}_{\cal
B}$.
\end{itemize}
\end{proof}

\begin{theorem}[Existence of counter-proofs in ${\cal I}_{\cal J}$]
\label{theo}
Let ${\cal I}$ be an inference system such that 
\begin{itemize}
\item ${\cal I}$ has a complement ${\cal J}$, 
\item ${\cal I}$ is included in and equivalent to a saturated system 
${\cal I}'$, 
\item 
the automaton ${\cal A}$ obtained by 
selecting the introduction rules in ${\cal I}'$ has a complement 
${\cal B}$. 
\end{itemize}
Then, if a sequent $\not\vdash A$ has a proof in the system
${\cal A}_{\cal B}$, it has a (possibly) infinite proof 
in the system ${\cal I}_{\cal J}$. 
\end{theorem}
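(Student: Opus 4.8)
The plan is to exhibit the desired (possibly) infinite proof \emph{co-inductively}, letting Proposition~\ref{complement000} drive each step of the construction. Since the $\not\vdash$-fragment of the complementation ${\cal I}_{\cal J}$ is, up to the relabelling $A \mapsto\; \not\vdash A$, exactly the system ${\cal J}$ (a $\not\vdash$-conclusion can only be produced by a $\not\vdash$-rule, whose premises are again $\not\vdash$-sequents), a (possibly) infinite proof of $\not\vdash A$ in ${\cal I}_{\cal J}$ is the same thing as a (possibly) infinite proof of $A$ in ${\cal J}$. As ${\cal J}$ is a complement, it is finite in conclusions, so the last Proposition of Section~\ref{Sc:Finite-Infinite-Proof} applies and characterises the propositions with a (possibly) infinite proof in ${\cal J}$ as the greatest fixed point of $F_{\cal J}$. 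It therefore suffices to show that the set
$$X = \{\, A \in {\cal S} \mid\; \not\vdash A \mbox{ has a proof in } {\cal A}_{\cal B} \,\}$$
is a post-fixed point of $F_{\cal J}$, that is $X \subseteq F_{\cal J}(X)$; for then $X$ is contained in the greatest fixed point of $F_{\cal J}$, and every element of $X$ — in particular the given $A$ — has a (possibly) infinite proof in ${\cal J}$.

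To establish $X \subseteq F_{\cal J}(X)$, I would take an arbitrary $A \in X$ and list the premise sequences $\langle A^1_1, ..., A^1_{n_1}\rangle$, ..., $\langle A^p_1, ..., A^p_{n_p}\rangle$ from which $A$ can be derived with a rule of ${\cal I}$; these are finitely many because ${\cal I}$ is finite in conclusions. Each such sequence is a sequence from which $A$ is derivable with a rule of ${\cal I}$, so Proposition~\ref{complement000}, applied to $A$ (which lies in $X$) and to the $i$-th sequence, yields an index $j_i$ with $A^i_{j_i} \in X$. By the definition of complement, $\langle A^1_{j_1}, ..., A^p_{j_p}\rangle$ is precisely one of the premise sequences from which $A$ can be derived with a rule of ${\cal J}$, and all of its members lie in $X$; hence $A \in F_{\cal J}(X)$, as required. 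Unfolding this post-fixed-point argument gives the explicit co-recursion corresponding to the algorithm: to build the last rule of the proof of $\not\vdash A$ one selects, for each rule of ${\cal I}$ concluding $A$, a premise whose negation is provable in ${\cal A}_{\cal B}$, emits the matching rule of ${\cal J}$, and recurses on each of the premises $\not\vdash A^i_{j_i}$, which again belong to $X$.

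The technical heart of the matter is already Proposition~\ref{complement000}, which guarantees that the local choice of premise $A^i_{j_i}$ always exists; the role of the present theorem is only to assemble these choices into a single (possibly) infinite tree. Consequently I expect the main conceptual step to be the recognition that this assembly is cleanest when phrased as the verification that $X$ is a post-fixed point of $F_{\cal J}$, so that the greatest-fixed-point principle performs the gluing for free, rather than constructing the infinite tree by an ad hoc recursion and separately arguing its well-definedness. The remaining points — finiteness of the index set $\{1,\dots,p\}$ (from finiteness in conclusions) and the fact that any selection of one premise per ${\cal I}$-rule is realised by an actual rule of ${\cal J}$ (from the definition of complement) — are immediate and require no computation.
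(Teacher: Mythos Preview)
Your proposal is correct and follows essentially the same approach as the paper: apply Proposition~\ref{complement000} to each ${\cal I}$-rule concluding $A$ to select a premise whose negation is provable in ${\cal A}_{\cal B}$, invoke the definition of complement to obtain the corresponding rule of ${\cal J}$, and iterate. The paper phrases the iteration as a direct co-inductive construction (``we co-inductively build a proof of these sequents''), while you package the same recursion as the post-fixed-point verification $X \subseteq F_{\cal J}(X)$; this is a tidier formalisation of the identical argument, not a different route.
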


\begin{proof}
By Proposition \ref{complement000}, for each rule of ${\cal I}$
allowing to derive $A$ from $\langle C_1, ..., C_n \rangle$, there
exists an $i$ such that $\not\vdash C_i$ is derivable in ${\cal
A}_{\cal B}$.  As ${\cal J}$ is a complement of ${\cal I}$, there
exists a rule of ${\cal I}_{\cal J}$ deriving $\not\vdash A$ from the
sequents $\not\vdash C_i$.  We co-inductively build a proof of
these sequents.
\end{proof}

\begin{example}
Consider the inference system ${\cal I}_{\cal J}$ and ${\cal A}_{\cal B}$
of Examples \ref{example1} and \ref{example3}. 
We have a proof of $\not\vdash P(a)$ in ${\cal A}_{\cal B}$
$${\small \irule{\irule{}{\not\vdash U(\varepsilon)}{} ~~~
    \irule{}{\not\vdash V(\varepsilon)}{} } {\not\vdash P(a)} {}}$$ To
transform this proof into a (possibly) infinite proof in ${\cal
  I}_{\cal J}$, we use the algorithm described by the proof of Theorem
\ref{theo} step by step. As this proof uses Proposition
\ref{complement000}, whose proof, in turn, uses Proposition
\ref{combinatorial}, this algorithm uses the algorithm described by
the proof of Proposition \ref{complement000}, that, in turn, uses that
described by the proof of Proposition \ref{combinatorial}.  We
describe these three algorithms one after the other.

In the inference system ${\cal I}$ given in Example \ref{example1},
there are two rules to prove $P(a)$.  The first from the sequence of
premises $\langle Q(a), R(a) \rangle$ and the second from the sequence
of premises $\langle S(a) \rangle$.  The proof of Proposition
\ref{complement000} gives us the proposition $Q(a)$ in the first
sequence and the proposition $S(a)$ in the second and proofs of
$\not\vdash Q(a)$ and $\not\vdash S(a)$ in ${\cal A}_{\cal B}$:
$${\small\begin{array}{lll}
\irule{\irule{}{\not\vdash U(\varepsilon)}{}
        ~~~
       \irule{}{\not\vdash V(\varepsilon)}{}}
       {\not\vdash Q(a)} 
        {}
&\hspace{1cm} \mbox{ and } \hspace{1cm}&   
\irule{}
        {\not\vdash S(a)}
        {}   
\end{array}}$$ 
As ${\cal J}$ is a complement of ${\cal I}$, 
there exists a rule in the complementation
${\cal I}_{\cal J}$ that permits to derive $\not\vdash P(a)$ from 
$\not\vdash Q(a)$ and $\not\vdash S(a)$:
$${\small \irule{\not\vdash Q(x)~~~\not\vdash S(x)}{\not\vdash P(x)}{}}$$
So we start the proof with this rule
$${\small \irule{\irule{...}{\not\vdash Q(a)}{}~~~\irule{...}{\not\vdash S(a)}{}}{\not\vdash P(a)}{}}$$
we co-inductively apply the same procedure to the sequents 
$\not\vdash Q(a)$ and $\not\vdash S(a)$, yielding the infinite proof given
in Example \ref{example2}.

Let us focus now on the way the proof of Proposition
\ref{complement000} gives us the proposition $Q(a)$ in the sequence
$\langle Q(a), R(a) \rangle$, and builds the proof of $\not\vdash
Q(a)$ in ${\cal A}_{\cal B}$.

The rule of ${\cal I}$ that permits to derive $P(a)$ from $\langle
Q(a), R(a) \rangle$ has rank $1$ in ${\cal I}'$ of Example \ref{example4}.  
The proposition
$Q(a)$ can be proved in ${\cal I}$ with introduction rules from the
sets $H^1_1 = \{U(\varepsilon)\}$ and $H^1_2 = \{V(\varepsilon)\}$.
The proposition $R(a)$ can be proved in ${\cal I}$ with introduction
rules from the set $H^2_1 = \{T(\varepsilon)\}$.  As the system 
${\cal I}'$ is saturated, it contains two derivable rules allowing to
derive directly $P(a)$ from $H^1_1 \cup H^2_1 = \{U(\varepsilon),
T(\varepsilon)\}$ and from $H^1_2 \cup H^2_1 = \{V(\varepsilon),
T(\varepsilon)\}$.  These rules are
$${\small\begin{array}{lll}
\irule{U(x)~~~T(x)}{P(ax)}{} &\hspace{1cm} \mbox{and} \hspace{1cm}&
\irule{V(x)~~~T(x)}{P(ax)}{}
\end{array}}$$
and both have a rank smaller than $1$, that is $0$.
So they are introduction rules, and hence rules of ${\cal A}$. 
Thus the proposition $P(a)$ can be derived in ${\cal A}$ from the sequence
of premises $\langle U(\varepsilon), T(\varepsilon) \rangle$ and from the 
sequence of premises $\langle V(\varepsilon), T(\varepsilon) \rangle$, respectively. 
From the proof of $\not\vdash P(a)$ in ${\cal A}_{\cal B}$, 
we find a proposition $K$ in each of these sequences, such that 
$\not\vdash K$ has a proof in ${\cal A}_{\cal B}$, that is $U(\varepsilon)$ in the 
first and  $V(\varepsilon)$ in the second, 
together with proofs in ${\cal A}_{\cal B}$
$${\small \begin{array}{lll}
\irule{}{\not\vdash U(\varepsilon)}{} 
&\hspace{1cm} \mbox{ and } \hspace{1cm}&
\irule{}{\not\vdash V(\varepsilon)}{}
\end{array}}$$
Set $W = \{U(\varepsilon), V(\varepsilon)\}$.
The proof of Proposition \ref{combinatorial}, applied to 
$W$,  $H^1_1$,  $H^1_2$ and $H^2_1$, gives us the index $1$.
Thus, both sets $H^1_1$ and $H^1_2$ contains an element 
of $W$: 
$U(\varepsilon)$ and $V(\varepsilon)$ respectively. 
Since these propositions are in $W$, we have proofs of the sequents
$\not\vdash U(\varepsilon)$ and $\not\vdash V(\varepsilon)$ in 
${\cal A}_{\cal B}$.
By the construction of ${\cal B}$, the system ${\cal A}_{\cal B}$ contains a 
rule that permits to derive $\not\vdash Q(a)$ from 
$\not\vdash U(\varepsilon)$ and 
$\not\vdash V(\varepsilon)$. We build this way the proof in ${\cal A}_{\cal B}$ 
$${\small \irule{\irule{}{\not\vdash U(\varepsilon)}{}
         ~~~
         \irule{}{\not\vdash V(\varepsilon)}{}}
        {\not\vdash Q(a)}
        {}}$$
        
Finally, let us focus on the way the proof of
Proposition \ref{combinatorial} gives the index $1$ from the sets
$H^1_1$, $H^1_2$, $H^2_1$, and $W$.  Note that the set $H^1_1 \cup
H^2_1$ is $\{U(\varepsilon)\} \cup \{T(\varepsilon)\}$ and $H^1_2 \cup
H^2_1$ is $\{V(\varepsilon)\} \cup \{T(\varepsilon)\}$.  Then either
each of the sets $\{U(\varepsilon)\}$ and $\{V(\varepsilon)\}$
contains an element of $W$ or $\{T(\varepsilon)\}$ does.  In this
case, each of the sets $\{U(\varepsilon)\}$ and $\{V(\varepsilon)\}$
contains an element of $W$.
So we obtain $l = 1$. 
\end{example}

\section{Application}
\label{Sc:Application}

In this section, we illustrate the main results of the paper 
with an application to Alternating pushdown systems.

\begin{figure}[t]\label{aps}
{\scriptsize
\noindent\framebox{\parbox{\textwidth
}{
{
~~~~~~~~~~~~~~$\begin{array}{ll}
\\
\irule{P_1(x)~...~P_n(x)}
      {Q(a x)}
      {\mbox{intro}~~~n \geq 0}
~~~~~~~~~~~~~~~~~~~~~~~~~~~~~~~~~~~~~~~~~~~~~~~~
&
\irule{P_1(a x)~P_2(x)~...~P_n(x)}
      {Q(x)}
      {\mbox{elim}~~~n \geq 1}
\\
\\
\irule{}
      {Q(x)}
      {\mbox{arbitrary}}
&
\irule{P_1(x)~...~P_n(x)} 
      {Q(x)} 
      {\mbox{neutral}~~~n \geq 1}
\\
\\
\irule{}
      {Q(\varepsilon)}
      {\mbox{empty}}
\end{array}$
\caption{Alternating pushdown systems \label{pushdown}}}}}}
\end{figure}

\begin{definition}[Alternating Pushdown Systems]
\label{APS}
Consider a language ${\cal L}$ containing a finite number of unary
predicate symbols, a finite number of unary function symbols, and a constant $\varepsilon$.  
An {\em Alternating pushdown system} is an
inference system whose rules are like those presented in Figure
\ref{pushdown}, where all premises in a rule are distinct.  
\end{definition}

The rules in the left column of Figure \ref{pushdown}---the intro, 
arbitrary, and empty rules---are introduction rules,
and those in the right column---the elimination and neutral rules---are not. 
Elimination rules have one major premise, the leftmost one, and
all the premises of a neutral rule are major.  

The system ${\cal I}$ introduced in Example \ref{example1} is 
an Alternating pushdown system.

In order to apply Theorems \ref{completeness}, \ref{main}, and
\ref{theo} to Alternating pushdown systems, we need to prove that

\begin{itemize}
\item
for each Alternating pushdown system ${\cal I}$, there exists 
a finite saturated system ${\cal I}'$ such that ${\cal I}$ 
is included in and equivalent to 
${\cal I}'$,

\item
each Alternating pushdown system ${\cal I}$ has a complementation ${\cal I}_{\cal J}$.
\end{itemize}

To build the saturated system ${\cal I}'$, we introduce 
the following saturation procedure that always terminates \cite{DJ}. 

\begin{definition}[Saturation]
\label{saturation}
Each time we have a non-introduction rule $g$ with $n$ premises such
that the $m$ leftmost premises are major and $m$ introduction rules
$f_1$, ..., $f_m$, we add the simplification of the derivable rule
obtained by composing $g$ with $f_1, ..., f_m$ and $n - m$ times the
identity function.
\end{definition}

To prove that each Alternating pushdown system ${\cal I}$ has a 
complementation ${\cal I}_{\cal J}$, we start with the following lemma.

\begin{lemma}
\label{concl}
For each Alternating pushdown system ${\cal I}$, 
there exists an equivalent inference system
$\hat{\cal I}$ and a finite set ${\cal C}$ such that 
\begin{itemize}
\item the conclusions of the rules of $\hat{\cal I}$ are in ${\cal C}$, 

\item for every closed proposition $A$ there exists a unique proposition 
$B$ in ${\cal C}$ such that $A$ is an instance of $B$.
\end{itemize}
\end{lemma}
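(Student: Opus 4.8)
The plan is to take for ${\cal C}$ the finite set of all maximally generic conclusion patterns, and then to replace each rule of ${\cal I}$ by its case-split into rules whose conclusions already lie in ${\cal C}$.

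First I would fix the finitely many predicate symbols $P_1, \ldots, P_r$ and function symbols $a_1, \ldots, a_k$ of ${\cal L}$, and set ${\cal C} = \{P_i(\varepsilon) \mid 1 \leq i \leq r\} \cup \{P_i(a_j x) \mid 1 \leq i \leq r,\ 1 \leq j \leq k\}$, which is finite since ${\cal L}$ has finitely many symbols. To establish the second bullet I would observe that every closed proposition has the form $P_i(t)$ for a closed term $t$, and that a closed term is either $\varepsilon$ or $a_j(t')$ for a uniquely determined head symbol $a_j$ and closed term $t'$. Hence $P_i(t)$ is an instance of $P_i(\varepsilon)$ exactly when $t = \varepsilon$, and of $P_i(a_j x)$ for exactly one pair $(i,j)$ otherwise; this yields both existence and uniqueness of the covering pattern $B \in {\cal C}$.

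Next I would build $\hat{\cal I}$ so that its conclusions all belong to ${\cal C}$. The intro rules, whose conclusion is already of the form $Q(a_j x)$, and the empty rule, whose conclusion is $Q(\varepsilon)$, are kept unchanged. Each remaining rule---elimination, neutral, and arbitrary---has the generic conclusion $Q(x)$, which is not in ${\cal C}$. I would replace such a rule by the $k+1$ rules obtained by specializing its variable: the instance under $x := \varepsilon$, whose conclusion is $Q(\varepsilon)$, and, for each function symbol $a_j$, the instance under $x := a_j x'$, whose conclusion is $Q(a_j x')$. Every new rule then has a conclusion in ${\cal C}$, and since there are finitely many original rules and finitely many function symbols, $\hat{\cal I}$ is finite, and remains finite in conclusions because each closed conclusion is matched by only finitely many rules.

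It then remains to prove that $\hat{\cal I}$ is equivalent to ${\cal I}$, that is, that the two systems prove the same closed propositions. The key observation---and the step I expect to be the crux---is that in an Alternating pushdown system every premise of a rule is obtained from the conclusion by adding or deleting a single leading function symbol, so that whenever the conclusion of a rule instance is closed, all of its premises are closed as well. Consequently any proof of a closed proposition in ${\cal I}$ has only closed nodes, and at each node the instance used is a closed instance, determined by whether the argument of the conclusion is $\varepsilon$ or headed by some $a_j$; this is precisely one of the specialized rules of $\hat{\cal I}$, so the proof transports verbatim into $\hat{\cal I}$. Conversely, every rule of $\hat{\cal I}$ is a substitution instance of a rule of ${\cal I}$, so any proof in $\hat{\cal I}$ is already a proof in ${\cal I}$. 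Both systems therefore prove the same closed propositions, which establishes the equivalence and completes the lemma.
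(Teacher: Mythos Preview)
Your proof is correct and follows essentially the same approach as the paper: take ${\cal C}$ to consist of the propositions of the form $P(\varepsilon)$ and $P(a_j x)$, and replace each rule with generic conclusion $Q(x)$ by its $\varepsilon$-headed and $a_j$-headed instances. Your write-up is in fact more careful than the paper's terse proof, which neglects to list the arbitrary rule among those to be split and leaves both the uniqueness check for ${\cal C}$ and the equivalence argument implicit.
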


\begin{proof}
We take for ${\cal C}$, for instance, the set containing all the atomic
propositions of the form $P(\varepsilon)$ and $P(a x)$. 
Then, we replace each neutral rule and elimination rule with the conclusion 
$P(x)$ with an instance with the conclusion $P(\varepsilon)$ and, for each 
function symbol $a$, an instance with the conclusion $P(a x)$.
\end{proof}

\begin{example}\label{hat}
Let ${\cal I}$ be the system introduced in Example \ref{example1}.
The system $\hat{I}$ is 
$${\scriptsize
\begin{array}{lllll}
\irule{U(x)}{Q(a x)}{}
~~~~~~~~~~~~~~
&
\irule{V(x)}{Q(a x)}{}
~~~~~~~~~~~~~~
&
\irule{T(x)}{R(a x)}{}
~~~~~~~~~~~~~~
&
\irule{}{T(\varepsilon)}{}
~~~~~~~~~~~~~~
&
\irule{}{T(a x)}{}
\\
\\
\irule{Q(\varepsilon)~R(\varepsilon)}{P(\varepsilon)}{}
~~~~~~~~~~~~~~
&
\irule{Q(a x)~R(a x)}{P(a x)}{}
~~~~~~~~~~~~~~
&
\irule{S(\varepsilon)}{P(\varepsilon)}{}
~~~~~~~~~~~~~~
&
\irule{S(a x)}{P(a x)}{}
~~~~~~~~~~~~~~
&
\irule{P(a)}{Q(\varepsilon)}{}
\\
\\
\irule{P(a a x)}{Q(a x)}{}
\end{array}}$$
\end{example}

\begin{definition}[Complement]\label{complement_APD}
Let ${\cal I}$ be an Alternating pushdown system, 
$\hat{\cal I}$ the system built at Lemma \ref{concl}, 
and ${\cal C}$ be a finite set of atomic propositions such that 
\begin{itemize}
\item the conclusions of the rules of $\hat{\cal I}$ are in the set 
${\cal C}$, 

\item for every closed proposition $A$, there exists a unique proposition 
$B$ in ${\cal C}$ such that $A$ is an instance of $B$.
\end{itemize}
Then, we define the system ${\cal J}$ as follows: 
for each $B$ in ${\cal C}$, if the system $\hat{\cal I}$ contains
$n$ rules $r^B_1, ..., r^B_n$ with the conclusion $B$, 
$$\irule{A^1_1~...~A^1_{m_1}}
        {B}
        {}$$
$$...$$
$$\irule{A^n_1~...~A^n_{m_n}}
        {B}
        {}$$
where $n$ may be zero
and each $m_i$ ($1 \leq i \leq n$) may be zero as well,
then the system ${\cal J}$ contains the $m_1 \times ... \times 
m_n$ rules 
$$\irule{A^1_{j_1}~...~A^n_{j_n}}
        {B}
        {}$$
\end{definition}

\begin{proposition}
\label{complement200}
The system ${\cal J}$ is a complement of  $\hat{\cal I}$: 
for each proposition $B$, 
if $\langle A^1_1, ..., A^1_{n_1} \rangle$, ...,
$\langle A^p_1, ..., A^p_{n_p} \rangle$ are all the sequences of 
premises from 
which $B$ is derivable in $\hat{\cal I}$, 
then the 
sequences of 
premises from which $B$ is derivable in 
${\cal J}$ are all the sequences of the form 
$\langle A^1_{j_1}, ..., A^p_{j_p} \rangle$, for some sequence $j_i$.
\end{proposition}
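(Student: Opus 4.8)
The plan is to check, directly from the definition of complement, the two things required of ${\cal J}$: that it is finite in conclusions, and that for every proposition $B$ the premise-sequences of the ${\cal J}$-rules concluding $B$ are exactly the transversals $\langle A^1_{j_1}, \ldots, A^p_{j_p}\rangle$ (one entry chosen from each sequence) of the premise-sequences of the $\hat{\cal I}$-rules concluding $B$. The entire argument is powered by Lemma \ref{concl}: every closed proposition $B$ is an instance of a \emph{unique} pattern $B_0 \in {\cal C}$, and every rule of $\hat{\cal I}$, hence by construction every rule of ${\cal J}$, has its conclusion in ${\cal C}$. This is what lets me reduce a statement quantified over all closed propositions $B$ to the finite list of patterns $B_0$ on which Definition \ref{complement_APD} is phrased.

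First I would fix a closed proposition $B$, take the unique pattern $B_0 \in {\cal C}$ with $B = \sigma B_0$, and note that in an Alternating pushdown system every variable of a premise already occurs in the conclusion; consequently each rule of $\hat{\cal I}$ and of ${\cal J}$ is a genuine partial function whose conclusion determines its premises, and the closed premises $\sigma A^i_k$ are well defined. A rule of $\hat{\cal I}$ can conclude $B$ only if its conclusion lies in ${\cal C}$ and admits $B$ as an instance; by the uniqueness clause of Lemma \ref{concl}, that conclusion must be $B_0$. Hence the sequences from which $B$ is derivable in $\hat{\cal I}$ are exactly the $\sigma$-instances $\langle \sigma A^i_1, \ldots, \sigma A^i_{m_i}\rangle$ of the premises of the rules $r^{B_0}_1, \ldots, r^{B_0}_n$ listed in Definition \ref{complement_APD}; in the notation of the complement definition, $p = n$ and the $i$-th listed sequence is the $\sigma$-image of the premises of $r^{B_0}_i$.

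Next I would run the identical analysis for ${\cal J}$. By construction the ${\cal J}$-rules with conclusion $B_0$ are precisely the $m_1 \times \cdots \times m_n$ product rules with premise-sequences $\langle A^1_{j_1}, \ldots, A^n_{j_n}\rangle$, and the same uniqueness argument shows these are the only ${\cal J}$-rules that can conclude $B$. Instantiating by $\sigma$, the sequences from which $B$ is derivable in ${\cal J}$ are exactly the $\langle \sigma A^1_{j_1}, \ldots, \sigma A^n_{j_n}\rangle$; since $\sigma A^i_{j_i}$ is the $j_i$-th entry of the $i$-th $\hat{\cal I}$-sequence, these are precisely the transversals $\langle A^1_{j_1}, \ldots, A^p_{j_p}\rangle$ demanded by the complement definition. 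The degenerate cases fall out automatically: if some $m_i = 0$ the product is empty, matching the fact that $B$ then has a one-step $\hat{\cal I}$-proof and so must have no ${\cal J}$-rule; if $n = 0$ the empty product gives the single premiseless rule concluding $B_0$, as required. Finiteness in conclusions of ${\cal J}$ is then clear, since ${\cal C}$ and $\hat{\cal I}$ are finite, so ${\cal J}$ is a finite set of rules, each determining a unique premise-sequence from any given closed conclusion.

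The step I expect to be the main obstacle is precisely this passage between the \emph{pattern} level, on which ${\cal J}$ is defined, and the \emph{instance} level, on which the complement property is stated. Getting it right requires both halves of Lemma \ref{concl} at once: the membership of every conclusion in ${\cal C}$ together with the uniqueness of $B_0$ guarantees that no rule attached to a different pattern $B_0' \neq B_0$ can spuriously conclude $B$, so that instantiation by $\sigma$ carries the product (transversal) structure on the patterns exactly onto the required structure on the closed premise-sequences, the $m_i = 0$ and $n = 0$ corner cases matching the intended reading of ``all sequences of the form $\langle A^1_{j_1}, \ldots, A^p_{j_p}\rangle$''.
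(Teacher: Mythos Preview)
Your proposal is correct and follows essentially the same route as the paper: fix a closed $B$, use the uniqueness clause of Lemma~\ref{concl} to find the unique pattern $C\in{\cal C}$ with $B=\sigma C$, identify the $\hat{\cal I}$-premise-sequences for $B$ as the $\sigma$-instances of the premises of the rules with conclusion $C$, and then read off the ${\cal J}$-premise-sequences as the $\sigma$-instances of the product rules built in Definition~\ref{complement_APD}. Your write-up is in fact more careful than the paper's own proof, making explicit the finiteness in conclusions, the corner cases $m_i=0$ and $n=0$, and the reason why uniqueness of the pattern prevents spurious rules from other $B_0'\in{\cal C}$ from concluding $B$.
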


\begin{proof}
Consider a closed proposition $B$.
There exists a 
unique proposition $C$ in ${\cal C}$ such that 
$B = \sigma C$. Consider all the rules with conclusion $C$ and 
let 
$\langle D^1_1, ..., D^1_{n_1} \rangle$, ...,
$\langle D^p_1, ..., D^p_{n_p} \rangle$ be all the premises of these rules. 
The 
sequences of 
premises from 
which $B$ is derivable in $\hat{\cal I}$, 
are 
$\langle \sigma D^1_1, ..., \sigma D^1_{n_1} \rangle$, ...,
$\langle \sigma D^p_1, ..., \sigma D^p_{n_p} \rangle$. 
By construction, ${\cal J}$ contains all the rules of the form 
$$\irule{D^1_{j_1}~...~D^n_{j_n}}
        {C}
        {}$$
Thus, the sequences from which $B$ is derivable in ${\cal J}$ are 
all the sequences 
of the form
$\langle \sigma D^1_{j_1}, ..., \sigma D^p_{j_p} \rangle$, 
for some sequence $j_i$.
\end{proof}

\begin{example}
Consider the system $\hat{\cal I}$ in Example \ref{hat}.
The complement ${\cal J}$ built 
as in Definition \ref{complement_APD} 
contains the following inference rules
$${\scriptsize 
\begin{array}{llllll}
\irule{Q(\varepsilon)~~S(\varepsilon)}{P(\varepsilon)}{}
~~~~~~~~~~~~~~
&
\irule{R(\varepsilon)~~S(\varepsilon)}{P(\varepsilon)}{}
~~~~~~~~~~~~~~
&
\irule{Q(a x)~~S(a x)}{P(a x)}{}
~~~~~~~~~~~~~~
&
\irule{R(a x )~~S(a x)}{P(a x)}{}
~~~~~~~~~~~~~~
&
\irule{P(a)}{Q(\varepsilon)}{}
\\
\\
\irule{P(a a x)~~U(x)~~V(x)}{Q(a x)}{}
~~~~~~~~~~~~~~
&
\irule{}{R(\varepsilon)}{}
~~~~~~~~~~~~~~
&
\irule{T(x)}{R(a x)}{}
~~~~~~~~~~~~~~
&
\irule{}{S(x)}{}
~~~~~~~~~~~~~~
&
\irule{}{U(x)}{}
\\
\\
\irule{}{V(x)}{}
\end{array}}$$
Note that this system is equivalent to 
the complement presented in Example \ref{example1} with slight differences 
due to the systematic definition of the set ${\cal C}$.
\end{example}

\section{Conclusion}\label{Sc:Conclusion}
We have introduced a generic method to construct a counter-proof, when
a proposition fails to have a proof in an inference system ${\cal I}$.
First, we have shown that if the inference system ${\cal I}$ has a
complement ${\cal J}$, then it can be extended to a complete system
${\cal I}_{\cal J}$, allowing to prove sequents of the form $\vdash A$
and $\not\vdash A$ with (possibly) infinite proofs (Theorem
\ref{completeness}).  Then, we have shown that if the inference system
${\cal I}$ is equivalent to an automaton ${\cal A}$, and this
automaton also has a complement ${\cal B}$, then a (possibly) infinite
proof in ${\cal I}_{\cal J}$ always has a finite representation as a
proof in ${\cal A}_{\cal B}$ (Theorem \ref{main}).  Finally, we have
introduced an effective and efficient method to transform a finite
proof of a sequent $\not\vdash A$ in the system ${\cal A}_{\cal B}$
into a (possibly) infinite proof of this sequent in the system ${\cal
  I}_{\cal J}$, explaining, step by step, the reason of the
non-provability (Theorem \ref{theo}).  These results have been
illustrated with an application to Alternating pushdown systems, where
they yield finite non-reachability certificates. They can also be
applied smoothly to other inference systems, such that Dynamic
Networks of Pushdown Systems \cite{BMT,LMW}.

\section*{Acknowledgement}

The authors want to thank Ahmed Bouajjani for enlightening discussions.
This work is supported by the ANR-NSFC project LOCALI (NSFC 61161130530
and ANR 11 IS02 002 01) and the 
Chinese National Basic Research Program (973) Grant No. 2014CB340302.

\end{document}